\documentclass[10pt, conference, letterpaper]{ IEEEtran}
\IEEEoverridecommandlockouts
\usepackage{cite}
\usepackage{amsmath,amssymb,amsfonts}
\usepackage{amsthm}
\usepackage{booktabs}
\usepackage{graphicx}
\usepackage{textcomp}
\usepackage{xcolor}
\usepackage{xspace}
\usepackage{algorithm}
\usepackage{algpseudocode}
\usepackage{tabularx}
\usepackage{multirow}

\newtheorem{assumption}{Assumption}

\newtheorem{theorem}{Theorem}
\newtheorem{lemma}{Lemma}

\newtheorem{proposition}{Proposition}
\newtheorem{remark}{Remark}

\newcommand{\SPFR}{\textnormal{\textsc{SPFR}}\xspace}
\newcommand{\NoRoute}{\texttt{NO\_SEMANTIC\_ROUTE}\xspace}

\def\BibTeX{{\rm B\kern-.05em{\sc i\kern-.025em b}\kern-.08em
    T\kern-.1667em\lower.7ex\hbox{E}\kern-.125emX}}
\begin{document}

\title{SPFR: Semantic Potential Field Routing for the Distributed Internet of Agents}

\author{
Yeguang Qin,~\IEEEmembership{Student Member, IEEE},
Liangqi Peng,~\IEEEmembership{Student Member, IEEE},
Fengxiao Tang,~\IEEEmembership{Senior Member, IEEE},\\
and Ming Zhao,~\IEEEmembership{Member, IEEE}
}

\maketitle

\begin{abstract}
In a distributed Internet of Agents (IoA) without centralized routing control, routing tasks to capability-matched executors is challenging because destinations are not predetermined and agents have bounded local service views. Discover-then-forward approaches, by contrast, select an executor before network forwarding and therefore do not directly support reselection when additional candidates become visible downstream. 
We introduce Semantic Potential Field Routing (SPFR), a distributed IoA routing algorithm that integrates executor discovery and reselection into hop-by-hop forwarding. SPFR represents each executor visible in a local semantic forwarding information base (FIB) as a task-conditioned semantic potential source, with utility setting its strength and hop distance inducing exponential attenuation. At each hop, the forwarding agent recomputes these potentials, reselects the dominant executor, and forwards the task one hop toward it. Under task-consistent frozen-FIB conditions, we prove loop freedom and
finite-hop termination and derive an explicit additive error bound under
bounded visibility relative to the full-visibility objective. Extensive simulations on real-world topologies show that SPFR approaches
the realized utility of distributed utility-greedy routing and
request-triggered global discovery while using fewer forwarding hops and
substantially fewer request-triggered messages, and remains robust under
network and service dynamics.
\end{abstract}

\begin{IEEEkeywords}
Internet of Agents, semantic routing, service discovery, potential-field routing
\end{IEEEkeywords}

\section{Introduction}
The Internet of Agents (IoA) is emerging as a framework for interconnecting
heterogeneous autonomous agents for collaborative task
execution~\cite{ioa,iaia}. In this setting, agents span edge and cloud
infrastructures and multiple administrative domains, exposing models, tools,
and services for remote invocation. Supporting such collaboration requires capability-aware service discovery, together with network support for request forwarding and result delivery.

In a distributed IoA without centralized routing control, tasks specify required capabilities rather than fixed executor addresses. Routing therefore involves identifying a suitable executor and delivering the task to it. Keeping every agent informed of all executors' capabilities and changing service states would require frequent network-wide dissemination. When such dissemination is bounded, each agent observes a location-dependent candidate set. The resulting problem is how to coordinate executor selection with network forwarding as candidate visibility changes across locations. 

Existing IoA protocols provide several prerequisites for inter-agent interaction. A2A supports capability advertisement and task exchange~\cite{a2a}, ANP targets cross-domain agent communication~\cite{anp}, and AgentDNS provides agent naming and discovery~\cite{agentdns}. At the application layer, model and agent routers use task-specific information to select among configured models, roles, or agent teams~\cite{routellm,masrouter}. Recent decentralized proposals distribute selection and coordination decisions among individual agents. BiRouter selects collaborators using local information, AMRO-S constructs semantically conditioned agent paths, and AIN outlines capability-oriented routing for open agent networks~\cite{birouter,amros,ain}. Beyond IoA, distributed semantic discovery has associated ontology-indexed service concepts with next-hop, hop-count, and pheromone state in routing tables~\cite{zhangantservice}. Together, these studies span capability representation, candidate resolution, decentralized coordination, and semantic path construction.

At the network layer, QoS and multi-criteria protocols optimize paths to identified destinations, whereas joint communication--computation schemes select services and routes using orchestrator- or controller-wide state~\cite{ddr,ppf,cocar,servicemesh}. Potential-field routing provides a closer technical precedent by using distributed potentials to guide requests toward service instances, gateways, or anycast targets~\cite{fieldservice,heat,pfnsar}. Potential fields have also coordinated task allocation with routing in manufacturing systems~\cite{pachpotential}. These studies demonstrate field-guided service selection and forwarding under their respective service, destination, and state assumptions. However, neither the agent-side nor the network-side line provides a distributed IoA routing algorithm that jointly updates task-dependent executor selection and next-hop forwarding as successive agents observe different local candidates.

To address this gap, we introduce Semantic Potential Field Routing (SPFR), an IoA routing algorithm that integrates executor discovery and reselection with hop-by-hop task forwarding. Bounded-radius dissemination of service descriptors populates a local semantic forwarding information base (FIB) at each agent. SPFR represents each visible eligible executor as a task-conditioned potential source. Its strength reflects semantic fit, projected load, and execution price, while its attraction decays exponentially with hop distance. At each hop, the forwarding agent recomputes these potentials, selects the dominant executor, and forwards the task through the corresponding FIB next hop. Because successive agents consult different local FIBs, candidates outside the source's initial view can enter later forwarding decisions without request-triggered network-wide discovery.  If the source sees no
positive-potential executor, SPFR returns NoRoute; zero-attractor
exploration is outside the present scope. The main contributions of this
paper are as follows:

\begin{itemize}

\item 
We formulate capability-driven IoA task routing as a constrained optimization
problem that jointly selects an executor and request path under execution,
return-route, deadline, and budget constraints. We prove that the resulting
joint optimization problem is NP-hard and derive tractable utility-only and
hop-regularized reference objectives for theoretical analysis.
\item 
To the best of our knowledge, SPFR is the first IoA routing algorithm to
integrate executor discovery and reselection with hop-by-hop forwarding through
semantic potential fields. At each forwarding agent, SPFR computes
task-dependent executor potentials from a bounded semantic FIB and jointly
updates the executor and corresponding next hop using local information. Under
the stated FIB conditions, we prove loop freedom and finite-hop termination.
We further establish global optimality for the hop-regularized objective under
full visibility and derive an explicit additive error bound under bounded
visibility.
\item 
We evaluate SPFR through paired simulations on GEANT, UNINETT, and Deltacom.
SPFR approaches the utility of distributed utility-greedy routing and
request-triggered global discovery with fewer forwarding hops and
substantially less signaling, while remaining robust under network and
service dynamics.
\end{itemize}

\section{Related Work}

\textbf{Agent capability discovery and task routing.}
Agent-interaction protocols support tool access, capability advertisement,
naming, cross-domain communication, and service discovery
~\cite{mcp,a2a,anp,agentdns}. Model and agent routers select among known LLMs
or collectives using task-dependent quality, cost, or difficulty
~\cite{frugalgpt,routellm,automix,irtrouter,iclrouter,masrouter,avengers}.
Directory, semantic-matching, and name-based systems resolve descriptive
requirements before transmission~\cite{ans,usableagentdiscovery,ins,
semanticmatch,dona,ndn}. Distributed schemes further incorporate
ontology-indexed routing state or decentralize collaborator and semantic-path
selection~\cite{zhangantservice,birouter,amros,ain}. These lines do not
explicitly model bounded, location-dependent service views in which a
downstream agent may replace an executor selected upstream.

\textbf{Network-aware and potential-field service routing.}
QoS and multi-criteria protocols route toward identified destinations, while
joint communication--computation schemes commonly rely on controller-wide
service and network state~\cite{wangqos,ddr,ppf,cocar,servicemesh}.
Potential-field routing provides the closest precedent. Lenders \emph{et al.}
model service instances as charges and forward requests toward the neighbor
with the greatest potential~\cite{fieldservice}. Related methods construct
gradients toward gateways or anycast targets and combine service attraction
with network locality~\cite{heat,densityanycast,pfnsar,pachpotential}.
However, they generally assume that relevant targets or gradients are already
available under their dissemination model. \SPFR{} instead jointly updates
executor selection and next-hop forwarding as bounded semantic FIBs reveal
new candidates.

\section{System Model}

\subsection{Network and Agent Model}

We model the IoA communication substrate as a finite, connected, undirected
base graph $G=(V,E)$, where $V=\{v_1,\ldots,v_{|V|}\}$ is the agent set and
each $\{u,v\}\in E$ is a direct underlay adjacency. Every agent may originate
and forward task requests, whereas only a subset advertises execution
services. Service capability is an agent attribute; \emph{executor} is the
task-specific role assigned to a selected and eligible agent.

Each service-capable agent $j$ advertises the descriptor
$\mathbf b_j=(\mathrm{id}_j,\mathcal C_j,\mathrm{state}_j,\tau_j,Q_j,
\mu_j,\pi_j^{\rm unit},\mathrm{seq}_j)$.
Here, $\mathcal C_j=\{\mathbf c_{j,a}\}_{a=1}^{p_j}$ contains its semantic
capability embeddings,
$\mathrm{state}_j\in\{\mathrm{active},\mathrm{inactive}\}$ indicates whether
it accepts tasks, and $Q_j\ge0$ is its advertised queued workload. The
remaining fields specify trust $\tau_j\in[0,1]$, service rate $\mu_j>0$, unit
execution price $\pi_j^{\rm unit}\ge0$, agent identity, and descriptor
sequence number.

The semantic control plane disseminates descriptors within a bounded hop
radius. Cached descriptors may therefore become stale because of propagation
delay, message loss, or subsequent state changes. For each undirected
adjacency $\{u,v\}\in E$, the underlay maintains separate directional
estimates. Direction $u\to v$ is described by
$\mathbf a_{uv}=(b_{uv},\delta_{uv},\kappa_{uv})$, where $b_{uv}>0$ is
available bandwidth, $\delta_{uv}\ge0$ is fixed link latency, and
$\kappa_{uv}\ge0$ is communication cost per data unit. In general,
$\mathbf a_{uv}\ne\mathbf a_{vu}$. For payload size $x\ge0$, the estimated
one-hop delay is $d_{uv}(x)=\delta_{uv}+x/b_{uv}$ and the communication cost
is $c_{uv}(x)=\kappa_{uv}x$. Runtime queueing, jitter, packet loss, and link
degradation may cause realized values to differ from these estimates.

At time $t$, the operational substrate is
$G(t)=(V(t),E(t))$, where $V(t)\subseteq V$ and $E(t)\subseteq E$.
Setting $\mathrm{state}_j=\mathrm{inactive}$ does not remove agent $j$ from
$V(t)$; it only prevents the agent from serving as an executor. Node and link
failures remove the corresponding elements from $V(t)$ and $E(t)$. Because
$G(t)$ need not remain connected, connectivity of $G$ guarantees only nominal
structural reachability, not an operational request path and valid return
route to an eligible executor.

\subsection{Task Semantics and Executor Utility}

\paragraph{Task descriptor}
We represent each task as
$T=(s_T,\mathcal R_T,\ell_T,\boldsymbol{\zeta}_T)$, where $s_T\in V$ is
the source, $\mathcal R_T=\{\mathbf r_b\}_{b=1}^{m_T}$ contains its
$m_T\ge1$ semantic requirements, and $\ell_T\ge0$ is its computational
workload. The task-policy descriptor is
$\boldsymbol{\zeta}_T=(\varsigma_T^{\rm req},\varsigma_T^{\rm ret},
\theta_T,\tau_T^{\min},B_T,D_T^{\max})$.
Here, $\varsigma_T^{\rm req}$ and $\varsigma_T^{\rm ret}$ are the request
and result payload sizes. The threshold $\theta_T$ is the minimum acceptable
similarity for every capability--requirement pair. The remaining fields
specify minimum trust $\tau_T^{\min}$, total budget $B_T$, and maximum
completion delay $D_T^{\max}$.

\paragraph{Semantic matching}
For candidate agent $j$, let
$M_{ab}(j,T)=\operatorname{sim}(\mathbf c_{j,a},\mathbf r_b)\in[0,1]$.
Let $\mathcal X_j(T)$ contain the binary assignment matrices
$\mathbf x=[x_{ab}]$ satisfying $\sum_a x_{ab}=1$ for every requirement
$b$, $\sum_b x_{ab}\le1$ for every capability $a$, and $x_{ab}=0$ whenever
$M_{ab}(j,T)<\theta_T$. If $\mathcal X_j(T)=\emptyset$, we set
$\chi_j^{\rm cap}(T)=S(j,T)=0$. Otherwise,
$\chi_j^{\rm cap}(T)=1$ and
$S(j,T)=m_T^{-1}\max_{\mathbf x\in\mathcal X_j(T)}
\sum_{a,b}x_{ab}M_{ab}(j,T)$.
Thus, eligibility requires distinct capabilities to cover all requirements
above $\theta_T$, while $S(j,T)$ is the maximum average matched similarity.
The assignment is solvable in polynomial time using the Hungarian
method~\cite{kuhn}.

\paragraph{Executor utility}
All agents use the same similarity function, utility weights, and
normalization parameters. They therefore obtain identical executor-side
quantities from the same task and descriptor snapshot. Both $Q_j$ and
$\ell_T$ are measured in computational-work units, while $\mu_j$ is measured
in work units per unit time. Let $\Delta_{\rm ref}>0$ and $\Pi_{\rm ref}>0$
denote the system-wide load and price normalization constants. The estimated
execution delay is $D_j^{\rm exec}(T)=(Q_j+\ell_T)/\mu_j$, and the normalized
projected workload is
$\rho_j^T=\min\{(Q_j+\ell_T)/(\mu_j\Delta_{\rm ref}),1\}$.
The task price is $\pi_j(T)=\pi_j^{\rm unit}\ell_T$, with normalized value
$\widetilde{\pi}_j(T)=\min\{\pi_j(T)/\Pi_{\rm ref},1\}$.

The task-conditioned executor utility is
$U_T(j)=w_sS(j,T)+w_l(1-\rho_j^T)+w_p(1-\widetilde{\pi}_j(T))$, where
$w_s,w_l,w_p\ge0$ and $w_s+w_l+w_p=1$. Hence,
$0\le U_T(j)\le1$. Utility ranks candidates only after the agent-side
eligibility conditions in Sec.~\ref{sec:eligibility} have been evaluated.

\paragraph{Request--return feasibility}
For phase $x\in\{\mathrm{req},\mathrm{ret}\}$, the task-specific link delay
and communication cost are
$d_{uv}^{x}(T)=d_{uv}(\varsigma_T^{x})$ and
$c_{uv}^{x}(T)=c_{uv}(\varsigma_T^{x})$.
Let $P^{\rm req}:s_T\leadsto j$ denote the request path and
$P^{\rm ret}:j\leadsto s_T$ the return path.

The estimated completion delay
$D_T(j;P^{\rm req},P^{\rm ret})$ sums the request-path delay
$\sum_{(u,v)\in P^{\rm req}}d_{uv}^{\rm req}(T)$, execution delay
$D_j^{\rm exec}(T)$, and return delay $D_j^{\rm ret}(s_T)$.
Similarly, the total cost $C_T(j;P^{\rm req},P^{\rm ret})$ sums request
communication cost $\sum_{(u,v)\in P^{\rm req}}c_{uv}^{\rm req}(T)$,
execution price $\pi_j(T)$, and return communication cost
$K_j^{\rm ret}(s_T)$. If either path is unavailable, the executor--path
combination is infeasible. Otherwise, it satisfies the service-level agreement (SLA) only if
$D_T(j;P^{\rm req},P^{\rm ret})\le D_T^{\max}$ and
$C_T(j;P^{\rm req},P^{\rm ret})\le B_T$.
These path-dependent constraints do not alter the executor utility $U_T(j)$.

\subsection{Local Semantic FIB and Reply Interface}
\label{sec:local-semantic-fib}

Let $V_{\rm svc}\subseteq V$ contain the agents that advertise execution
services. The control horizon $H_{\rm ctrl}\in\mathbb Z_{\ge0}$ bounds the
hop radius over which their descriptors are disseminated. The superscript
$H$ denotes the resulting bounded semantic view. This radius limits local
information but does not bound the complete request path.

At the control-plane snapshot considered for a task, let $h(i,j)$ be the
shortest-hop distance from $i$ to $j$ in the active graph, with
$h(i,j)=\infty$ when they are disconnected. Agent $i$ stores each visible
service descriptor and its request-forwarding state as
$FIB_i^H[j]=(\mathbf b_j,h_i(j),next_i(j),timestamp)$.
Here, $h_i(j)$ and $next_i(j)$ are the stored hop distance and request next
hop. For a converged entry, $h_i(j)=h(i,j)$ and $next_i(j)$ lies on a
shortest-hop path to $j$. A service-capable agent maintains a self-entry with
$h_i(i)=0$ and $next_i(i)=i$. The timestamp supports freshness checks, while
$\mathrm{seq}_j$ orders updates generated by agent $j$.

Local visibility and the corresponding visible service set are defined as
\begin{equation*}
\begin{aligned}
vis_i^H(j)
&=\mathbf 1\{j\text{ has a valid entry in }FIB_i^H\},\\[-1mm]
\mathcal V_i^H
&=\{j\in V_{\rm svc}:vis_i^H(j)=1\}.
\end{aligned}
\end{equation*}
Every valid remote entry satisfies $h_i(j)\le H_{\rm ctrl}$. A local FIB is
\emph{TTL-complete} when
\begin{equation*}
\mathcal V_i^H
=
\{j\in V_{\rm svc}:h(i,j)\le H_{\rm ctrl}\}.
\end{equation*}
TTL completeness is an analytical condition, not a prerequisite for making a
local forwarding decision.

The FIB stores no task-specific match, utility, or candidate set; these
quantities are computed after task arrival. It supplies request next hops,
while the underlay provides return-route feasibility and aggregate metrics
through
$\operatorname{ReplyLookup}(j,s_T)\mapsto
(valid_j^{\rm ret}(s_T),K_j^{\rm ret}(s_T),D_j^{\rm ret}(s_T))$.
Here, $valid_j^{\rm ret}(s_T)$ indicates whether an operational return route
exists, while $K_j^{\rm ret}(s_T)$ and $D_j^{\rm ret}(s_T)$ are its estimated
cost and delay. A distributed link-state protocol or an equivalent underlay
service can implement this lookup without performing semantic executor
selection. The lookup reveals no remote service descriptor and therefore
does not enlarge semantic visibility. Structural reachability, return-route
availability, and semantic visibility are consequently distinct.

\subsection{Task-Specific Executor Eligibility}
\label{sec:eligibility}

For analysis, define $\mathcal A_T^0\subseteq V_{\rm svc}$ as the
request-path-independent reference eligibility set. An agent
$j\in V_{\rm svc}$ belongs to $\mathcal A_T^0$ when
$\mathrm{state}_j=\mathrm{active}$,
$\chi_j^{\rm cap}(T)=1$, $\rho_j^T<1$,
$\tau_j\ge\tau_T^{\min}$, and
$valid_j^{\rm ret}(s_T)=1$. It must also satisfy
$\pi_j(T)+K_j^{\rm ret}(s_T)\le B_T$ and
$D_j^{\rm exec}(T)+D_j^{\rm ret}(s_T)\le D_T^{\max}$.

The condition $\chi_j^{\rm cap}(T)=1$ already requires a complete one-to-one
matching in which every selected capability--requirement pair meets
$\theta_T$; no additional threshold on the average score $S(j,T)$ is needed.
The condition $\rho_j^T<1$ excludes agents whose projected workload fills the
service-admission horizon, while load differences below saturation remain
represented in $U_T(j)$. Because the final two conditions omit request-path
delay and cost, they are necessary path-independent screens rather than
sufficient end-to-end feasibility conditions.

The set $\mathcal A_T^0$ is an analytical reference and is not maintained as
global routing state. For the frozen analysis, agent $i$'s visible eligible
set is
$\mathcal E_i^H(T)=\mathcal A_T^0\cap\mathcal V_i^H$. If its FIB is
TTL-complete, then
$\mathcal E_i^H(T)=\{j\in\mathcal A_T^0:
h(i,j)\le H_{\rm ctrl}\}$.

For the frozen analysis, define the gated utility
$\overline U_T(j)=U_T(j)$ for $j\in\mathcal A_T^0$ and
$\overline U_T(j)=0$ otherwise.

At runtime, agent $i$ evaluates
$\widehat{\operatorname{Eligible}}_i(j,T,t)$ using the descriptor stored in
its current FIB, the task parameters, and the return-route lookup. Let
$\widehat S_i(j,T,t)$ and $\widehat U_{T,i}(j,t)$ denote the semantic score
and executor utility evaluated from that stored descriptor. Define the
locally gated utility as
\[
\widehat U_{T,i}^{+}(j,t)
=
\widehat U_{T,i}(j,t)
\widehat{\operatorname{Eligible}}_i(j,T,t).
\]
The implementation never constructs or enumerates $\mathcal A_T^0$. Under
the task-consistent frozen snapshot, for every visible executor,
$\widehat U_{T,i}^{+}(j,t)=\overline U_T(j)$.

The eligibility set captures executor-side and return-side necessary
conditions but does not certify an end-to-end request path. The next section
combines request-path delay and cost with the execution and return-route
quantities.

\section{Problem Formulation}

Problems~$\mathrm{P1}$, $\mathrm{P1}^{\prime}$, and $\mathrm{P2}$ use the
control-plane state observed when task $T$ is admitted. We suppress the
snapshot index. Later control-plane updates affect subsequent tasks but not
the analyzed decision instance. For nonnegative objectives, we adopt
$\max\emptyset=0$.

The model separates executor utility and path-independent eligibility from
request-path feasibility. We first formulate the ideal joint decision, then
introduce a utility-only upper benchmark and a hop-regularized routing
objective.

\subsection{P1: Global Joint Executor--Path Optimization}

For request path $P:s_T\leadsto j$, define its accumulated communication cost
as $C_T^{\rm req}(P)=\sum_{(u,v)\in P}c_{uv}^{\rm req}(T)$ and its delay as
$D_T^{\rm req}(P)=\sum_{(u,v)\in P}d_{uv}^{\rm req}(T)$.

Problem~$\mathrm{P1}$ jointly selects executor $j$ and request path $P$ to
maximize $U_T(j)$~\cite{poularakis,yuanmec}. Feasibility requires $j\in\mathcal A_T^0$ and
$P:s_T\leadsto j$. It further requires
$C_T^{\rm req}(P)+\pi_j(T)+K_j^{\rm ret}(s_T)\le B_T$ and
$D_T^{\rm req}(P)+D_j^{\rm exec}(T)+D_j^{\rm ret}(s_T)
\le D_T^{\max}$. The return route is supplied by the underlay and does not
constitute a second semantic executor-selection decision.

Let \textnormal{\textsc{P1-Feas}} denote whether Problem~$\mathrm{P1}$ has a
feasible executor--path pair. All numerical inputs are nonnegative
finite-precision binary values.

\begin{theorem}
\textnormal{\textsc{P1-Feas}} is NP-complete. Consequently,
Problem~$\mathrm{P1}$ is NP-hard.
\end{theorem}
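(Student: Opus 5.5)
We prove membership in NP and NP-hardness separately, reducing from the Restricted Shortest Path (RSP) problem, i.e., the two-criteria (delay-constrained least-cost) path problem. RSP is NP-complete even on undirected graphs with nonnegative integer weights, by reduction from \textsc{Partition}.

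\emph{Membership in NP.} A certificate is an executor $j$ together with a request path $P:s_T\leadsto j$. We may take $P$ to be a simple path without loss of generality. Because all link delays and costs are nonnegative, removing a cycle never increases $C_T^{\rm req}(P)$ or $D_T^{\rm req}(P)$. A simple path has at most $|V|-1$ edges, so the certificate has polynomial size. Given the certificate, the verifier checks membership $j\in\mathcal A_T^0$ in polynomial time. The capability condition $\chi_j^{\rm cap}(T)=1$ reduces to deciding whether a bipartite matching saturating all requirements exists over pairs with $M_{ab}\ge\theta_T$, which is polynomial by the Hungarian method~\cite{kuhn}. The remaining screens are arithmetic comparisons using the provided $\operatorname{ReplyLookup}$ values. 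The verifier then sums the per-link values $d_{uv}^{\rm req}(T)$ and $c_{uv}^{\rm req}(T)$ along $P$ and checks the two SLA inequalities. All numbers are finite-precision binary, so every sum and comparison is polynomial in the input length.

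\emph{Hardness.} Take an RSP instance: a graph $G'=(V',E')$ with source $s$, target $t$, edge costs $c_e$, edge delays $d_e$, and bounds $\bar C$ and $\bar D$. The question is whether some $s$--$t$ path has cost at most $\bar C$ and delay at most $\bar D$. We build a P1 instance as follows.
\begin{itemize}
\item The base graph is $G=G'$ with $s_T=s$. The graph can be assumed connected by restricting to the component of $s$.
\item Only $t$ advertises a service, so $V_{\rm svc}=\{t\}$. It is active, with trust $1$, a single capability embedding equal to the single task requirement (similarity $1\ge\theta_T$), $Q_t=0$, $\ell_T=0$, $\mu_t=1$ and $\pi_t^{\rm unit}=0$. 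Hence $\rho_t^T=0<1$, $\pi_t(T)=0$ and $D_t^{\rm exec}(T)=0$.
\item Set $\varsigma_T^{\rm req}=1$. For each edge and both directions, choose $\kappa_{uv}=c_e$ and $\delta_{uv}=d_e$, and choose $b_{uv}$ large.
\end{itemize}
A large finite bandwidth $b_{uv}$ still adds the transmission term $1/b_{uv}$ to each link delay. The cleanest fix is to scale: multiply all delays and $\bar D$ by $2|V|$ and set $b_{uv}=1$, so each link delay becomes $2|V|d_e+1$. Any path uses at most $|V|-1$ edges, so the added transmission time is less than $|V|<2|V|$. Setting $D_T^{\max}=2|V|\bar D+|V|-1$ then preserves the yes/no answer for simple paths because the original delays are integers. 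Set the return lookup to $valid_t^{\rm ret}(s)=1$ with $K_t^{\rm ret}=D_t^{\rm ret}=0$, which an underlay can realize by a dedicated zero-cost, zero-delay return estimate. Finally set $B_T=\bar C$. Then $t\in\mathcal A_T^0$, and a feasible executor--path pair exists iff the RSP instance is a yes-instance. The construction is clearly polynomial, so \textsc{P1-Feas} is NP-complete. Any algorithm solving the optimization P1 decides feasibility: under the convention $\max\emptyset=0$ together with the rule that the output is a feasible pair whenever one exists, the solver's answer reveals whether a feasible pair exists. Hence P1 is NP-hard.

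The main obstacle is fidelity of the embedding into the model's specific link-metric form. The required $b_{uv}>0$ forces the transmission term, which the rescaling above handles. A second concern is whether the return metrics may be chosen independently of the request-direction metrics. If they may not, we would instead set $\varsigma_T^{\rm ret}=0$ and use $\delta_{vu}=0$ on the return direction, which the model permits because $\mathbf a_{uv}\neq\mathbf a_{vu}$ is allowed. This makes the return cost zero and the return delay at most $|V|/b$, absorbed the same way.
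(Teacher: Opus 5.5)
Your proof is correct, but it takes a different route from the paper. The paper reduces directly from \textsc{Partition} using a chain of $n$ diamonds. The $x_i$ branch has cost $a_i$ and delay $2$, the $y_i$ branch has cost $0$ and delay $2+a_i$, and it sets $B_T=S/2$ and $D_T^{\max}=2n+S/2$. You instead cite the NP-completeness of the weight-constrained shortest path problem and embed an arbitrary RSP instance unchanged into the base graph with a single eligible executor.

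Your route gives a more structural statement: \textsc{P1-Feas} contains general RSP on arbitrary topologies as its one-executor special case. The cost is that it relies on an external hardness result. It also forces you to handle the mandatory transmission term $\varsigma_T^{\rm req}/b_{uv}$ by rescaling, because different $s$--$t$ paths have different hop counts. Your scaling by $2|V|$ with $D_T^{\max}=2|V|\bar D+|V|-1$ is correct, since integrality turns a slack below $2|V|$ into $D(P)\le\bar D$. The paper's gadget is self-contained; it essentially inlines the standard RSP hardness proof. Every simple $v_0$--$v_n$ path in it has exactly $2n$ hops and crosses each edge in a forced direction. So the transmission term is the constant $2n$, already folded into the branch delays, and return-direction latencies can be zeroed through link asymmetry without affecting any request path.

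That last property is exactly what your fallback lacks. In a general undirected instance, a simple $s$--$t$ path may cross an edge in either direction. Setting $\delta_{vu}=0$ on a ``return direction'' can therefore change some request path's delay, and the reduction breaks. The fallback is unnecessary anyway. With $\varsigma_T^{\rm ret}=0$ the return cost vanishes, and $D_t^{\rm ret}(s_T)$ is a single constant independent of the request path, so you can simply add it to $D_T^{\max}$.

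Finally, in your instance $U_T(t)=w_s+w_l+w_p=1$. The P1 optimum is therefore $1$ exactly when a feasible pair exists. This gives the NP-hardness consequence directly, without appealing to conventions about what the solver outputs.
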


\begin{proof}
Membership in NP follows because a certificate specifies an executor and a
simple request path. Capability matching, path validity, and all additive
deadline and budget constraints can be verified in polynomial time.

For NP-hardness, reduce from \textsc{Partition}. Given positive integers
$a_1,\ldots,a_n$ with total sum $S$, construct an undirected chain of $n$
diamond subgraphs. Diamond $i$ connects $v_{i-1}$ to $v_i$ through either
$x_i$ or $y_i$. The branch through $x_i$ has total request cost $a_i$ and
delay $2$, whereas the branch through $y_i$ has total request cost $0$ and
delay $2+a_i$. These polynomially representable branch totals follow from the link model by
setting the request payload to one and the result payload, execution delay,
and price to zero. Set $s_T=v_0$ and make $v_n$ the only eligible executor. Let
\[
B_T=S/2,\qquad D_T^{\max}=2n+S/2.
\]
Every simple $v_0$--$v_n$ path selects exactly one branch from each diamond.
If the integers assigned to the $x_i$ branches sum to $X$, its request cost
is $X$ and its request delay is $2n+S-X$. Hence, both constraints are
satisfied if and only if
\[
X\le S/2
\quad\text{and}\quad
S-X\le S/2,
\]
which holds exactly when $X=S/2$. Therefore, the constructed
Problem~$\mathrm{P1}$ instance is feasible if and only if the
\textsc{Partition} instance is feasible. The construction is polynomial.
\end{proof}

\subsection{P1$^{\prime}$: Utility-Only Upper Benchmark}

Problem~$\mathrm{P1}^{\prime}$ isolates executor ranking from constrained-path
search while retaining the path-independent eligibility defined in
Sec.~\ref{sec:eligibility}. Its optimum is
$U_{\rm sel}^*(T)=\max_{j\in\mathcal A_T^0}U_T(j)$.
For each feasible Problem~$\mathrm{P1}$ instance, let
$U_{\mathrm{P1}}^*(T)$ denote the optimal utility. Every
Problem~$\mathrm{P1}$-feasible executor belongs to $\mathcal A_T^0$, so
$U_{\rm sel}^*(T)\ge U_{\mathrm{P1}}^*(T)$.

Thus, $U_{\rm sel}^*(T)$ is a polynomial-time global upper benchmark on the
snapshot utility attainable by Problem~$\mathrm{P1}$. It requires only
polynomial-time capability matching and candidate scanning, without
constrained-path search.

\subsection{P2: Hop-Regularized Semantic Routing Objective}

Problem~$\mathrm{P1}^{\prime}$ has no preference for executors near the
current routing location. We therefore define the global snapshot objective
$\Psi_s^*(T)=\max_{j\in\mathcal A_T^0}
\overline U_T(j)\exp[-\omega_h h(s_T,j)]$, where $\omega_h>0$ controls hop
attenuation and $h$ is the shortest-hop distance. The subscript $s$ denotes
the source $s_T$.

Problem~$\mathrm{P2}$ is a global reference objective. Once eligibility and
utility are available, one breadth-first search computes all source distances
in $O(|V|+|E|)$, followed by a linear candidate scan.

At forwarding agent $i$, \SPFR{} evaluates the local counterpart of
Problem~$\mathrm{P2}$ over $\mathcal E_i^H(T)$ using the stored distance
$h_i(j)$. Both the candidate set and distance origin may therefore change
between hops. Section~\ref{sec:visibility-analysis} proves exact realization
under full visibility and derives an additive bounded-view error.

Problem~$\mathrm{P2}$ retains only the path-independent conditions in
$\mathcal A_T^0$. It does not evaluate accumulated request-path cost or delay
and therefore does not certify Problem~$\mathrm{P1}$ feasibility.
Accordingly, \SPFR{} is not claimed as a polynomial-time approximation
algorithm for Problem~$\mathrm{P1}$.

If some eligible executor has positive utility, Problem~$\mathrm{P2}$ has
the same maximizers as
$\max_{j\in\mathcal A_T^0:U_T(j)>0}
[\log U_T(j)-\omega_h h(s_T,j)]$.
Hence, $\omega_h$ controls the trade-off between log-utility and hop locality.
The hop term is a locality regularizer, not an estimate of physical latency
or communication cost.

\begin{proposition}[Segment-composable attenuation]
\label{prop:exponential-kernel}
Let $g:\mathbb Z_{\ge0}\rightarrow(0,1]$ satisfy $g(0)=1$,
$g(1)=\gamma\in(0,1)$, and
$g(h_1+h_2)=g(h_1)g(h_2)$. Then
$g(h)=\gamma^h=\exp(-\omega_h h)$, where
$\omega_h=-\log\gamma>0$.
\end{proposition}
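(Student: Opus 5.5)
The plan is to prove $g(h)=\gamma^h$ by induction on $h\in\mathbb Z_{\ge0}$, using only the multiplicative functional equation with $h_2=1$. Nothing from earlier in the paper is needed. The domain is the nonnegative integers, so no continuity or measurability assumption is required; this is unlike the Cauchy equation on the reals.

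\textbf{Base case.} For $h=0$, the hypothesis $g(0)=1=\gamma^0$ gives the claim directly. The value $g(0)=1$ is also forced by the equation itself. Setting $h_1=h_2=0$ gives $g(0)=g(0)^2$, and since $g$ takes values in $(0,1]$ we have $g(0)\neq0$, so $g(0)=1$.

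\textbf{Inductive step.} Suppose $g(h)=\gamma^h$ for some $h\ge0$. Setting $h_1=h$ and $h_2=1$ in the composition rule gives
\[
g(h+1)=g(h)\,g(1)=\gamma^h\gamma=\gamma^{h+1}.
\]
Hence $g(h)=\gamma^h$ for every $h\in\mathbb Z_{\ge0}$.

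\textbf{Exponential form.} Since $\gamma\in(0,1)$, the quantity $\omega_h=-\log\gamma$ is well defined and strictly positive. Then $\gamma^h=\exp(h\log\gamma)=\exp(-\omega_h h)$, which gives the stated form.

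There is no real obstacle. The only points to check are that $g(0)=1$ is consistent with the functional equation and that $\gamma<1$ is what makes $\omega_h>0$. It is also worth a remark that the converse holds: $\gamma^h$ satisfies all three hypotheses. So the exponential kernel is the unique segment-composable attenuation with one-hop factor $\gamma$. This uniqueness is what justifies the form $\exp[-\omega_h h(s_T,j)]$ used in Problem~$\mathrm{P2}$.
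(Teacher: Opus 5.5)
Your proof is correct and matches the paper's argument: the base case $g(0)=1$, induction via $g(h+1)=g(h)g(1)$, then substituting $\omega_h=-\log\gamma$, which is positive because $\gamma<1$. Two of your remarks are correct additions that the paper leaves implicit: that $g(0)=1$ is already forced by the functional equation, and that $\gamma^h$ satisfies all the hypotheses, so the kernel is unique.
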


\begin{proof}
The result holds at $h=0$. Multiplicative composition gives
$g(h+1)=g(h)g(1)$. Starting from $g(0)=1$, induction yields
$g(h)=\gamma^h$. Substituting $\omega_h=-\log\gamma$ gives
$g(h)=\exp(-\omega_h h)$.
\end{proof}

Proposition~\ref{prop:exponential-kernel} motivates exponential attenuation
through segment composability: attenuation over concatenated path segments
equals the product of their individual attenuation factors. We do not claim
that this form empirically dominates every alternative decay function.
Instead, it provides a consistent potential transformation for additive hop
distances, while $\omega_h$ controls the empirical utility--locality
trade-off.

For $\overline U_T(j)>0$, define the effective ranking distance as
$d_{\rm eff}(i,j,T)=h(i,j)-\omega_h^{-1}\log\overline U_T(j)$; set
$d_{\rm eff}(i,j,T)=+\infty$ when $\overline U_T(j)=0$. For a fixed task and
forwarding location, maximizing semantic potential is equivalent to minimizing
$d_{\rm eff}$. This quantity is a task-dependent ranking transformation, not
a physical underlay distance or graph metric.

\section{SPFR Design}

Solving Problem~$\mathrm{P2}$ centrally requires global knowledge of all
eligible executors and their source-to-executor hop distances. \SPFR instead
evaluates a node-local counterpart using valid entries in each agent's bounded
semantic FIB. Figure~\ref{fig:spfr-discovery} illustrates how a downstream
forwarding agent discovers a stronger executor and updates both the selected
executor and the next hop.

\begin{figure}[t]
    \centering
    \includegraphics[width=\columnwidth]
    {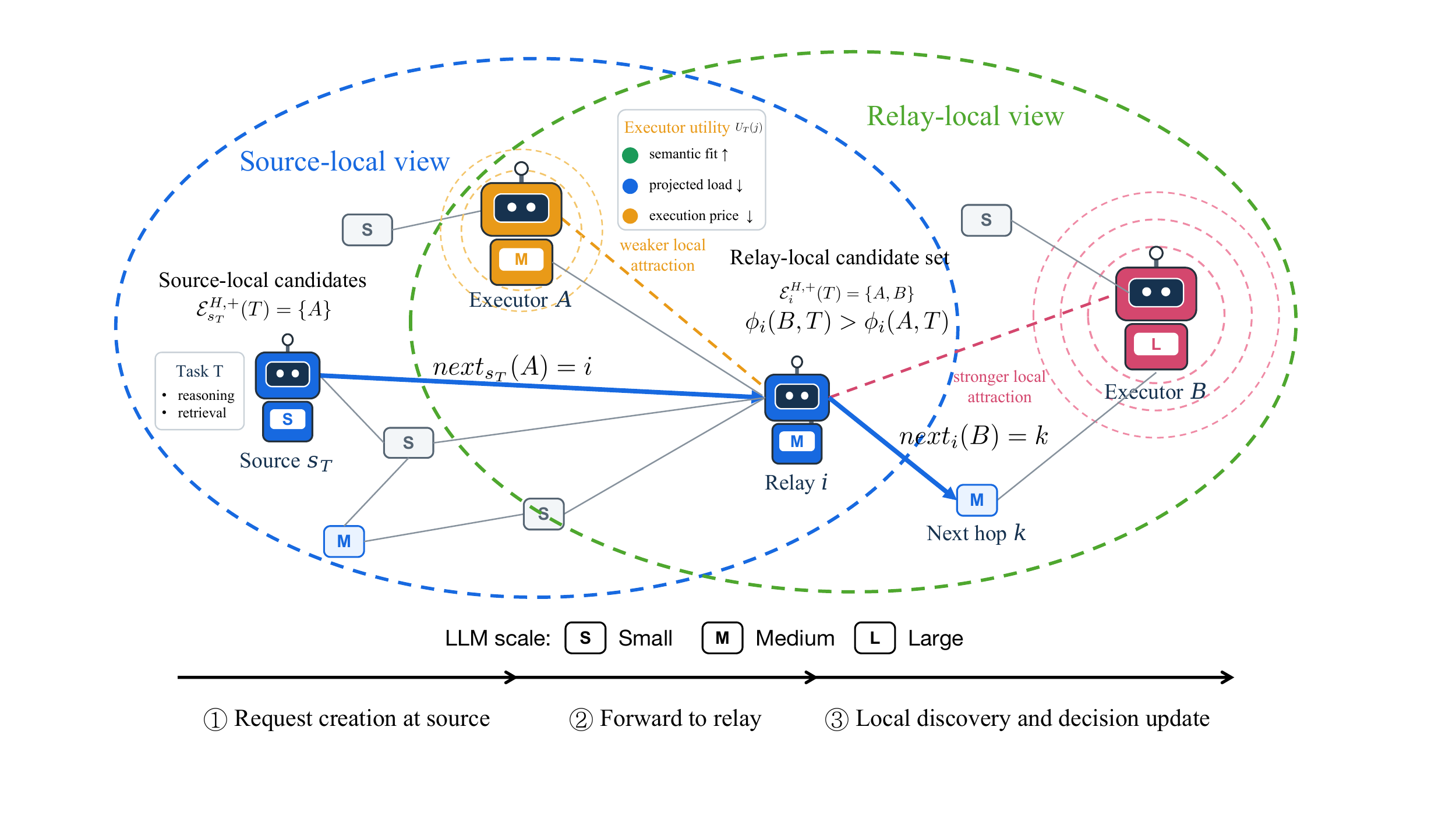}
    \caption{Source-local executor selection and in-path reselection in SPFR.}
    \label{fig:spfr-discovery}
\end{figure}

\subsection{Local Semantic Potential}
At runtime, each visible executor $j$ induces the locally estimated potential
$\widehat\phi_i(j,T,t)=\widehat U_{T,i}^{+}(j,t)
\exp[-\omega_h h_i(j)]$. Agent $i$ considers
$\widehat{\mathcal E}_i^{H,+}(T,t)=
\{j\in\mathcal V_i^H:\widehat U_{T,i}^{+}(j,t)>0\}$ and computes
$\widehat\Phi_i^H(T,t)=
\max_{j\in\widehat{\mathcal E}_i^{H,+}(T,t)}
\widehat\phi_i(j,T,t)$, with $\max\emptyset=0$.

Under the task-consistent frozen snapshot, the local estimates coincide with
their analytical counterparts. We therefore use
$\phi_i(j,T)=\overline U_T(j)\exp[-\omega_h h_i(j)]$,
$\mathcal E_i^{H,+}(T)=
\{j\in\mathcal E_i^H(T):\overline U_T(j)>0\}$, and
$\Phi_i^H(T)=
\max_{j\in\mathcal E_i^{H,+}(T)}\phi_i(j,T)$ in the theoretical analysis.
Thus, the hatted quantities define the locally executable runtime rule,
whereas their un-hatted counterparts are used in the frozen-snapshot proofs.

\subsection{Hop-by-Hop Forwarding and Complexity}

If $\widehat{\mathcal E}_i^{H,+}(T,t)\ne\emptyset$, agent $i$ selects
$j_i^{\rm dom}\in\operatorname*{arg\,max}_{j\in
\widehat{\mathcal E}_i^{H,+}(T,t)}\widehat\phi_i(j,T,t)$. Ties are broken deterministically by executor identifier. If
$j_i^{\rm dom}=i$, agent $i$ executes the task; otherwise, it forwards the
task to $next_i(j_i^{\rm dom})$, where the next agent repeats the decision
using its own semantic FIB. If no positive-potential candidate exists, \SPFR
returns \NoRoute without performing semantic frontier exploration.

The forwarding rule operates on the valid local FIB available at each
decision. Section~\ref{sec:analytical-conditions} states the consistency and
completeness conditions required by the formal guarantees.

Algorithm~\ref{alg:spfr} summarizes the SPFR data-plane procedure. Each
iteration is executed by the current forwarding agent using only its local
semantic FIB and estimated runtime state.

\begin{algorithm}[t]
\caption{SPFR forwarding algorithm}
\label{alg:spfr}
\begin{algorithmic}[1]
\Require Task $T$ at source $s_T$; hop budget $H_{\max}$
\State $i\gets s_T$; $q\gets0$
\While{$q\le H_{\max}$}
    \State $t\gets\operatorname{Now}()$
    \State $\widehat{\mathcal E}_i^{H,+}(T,t)\gets
    \{j\in\mathcal V_i^H:
    \widehat U_{T,i}^{+}(j,t)>0\}$
    \If{$\widehat{\mathcal E}_i^{H,+}(T,t)=\emptyset$}
        \State \Return \NoRoute
        \Comment{no locally eligible attractor}
    \EndIf
    \State $j_i^{\rm dom}\gets
    \operatorname*{arg\,max}_{j\in
    \widehat{\mathcal E}_i^{H,+}(T,t)}
    \widehat\phi_i(j,T,t)$
    \State \textbf{break ties by} executor identifier
    \If{$j_i^{\rm dom}=i$}
        \State \Return $\operatorname{ExecuteAndReply}(T,i,s_T)$
    \EndIf
    \If{$q\ge H_{\max}$}
        \State \Return \NoRoute
        \Comment{hop budget exhausted}
    \EndIf
    \State $k\gets next_i(j_i^{\rm dom})$
    \If{$k=\bot$ \textbf{or} $\{i,k\}\notin E(t)$}
        \State \Return \NoRoute
        \Comment{stale or unavailable next hop}
    \EndIf
    \If{$\operatorname{Forward}(T,i,k)=\mathrm{failure}$}
        \State \Return \NoRoute
        \Comment{runtime forwarding failure}
    \EndIf
    \State $i\gets k$; $q\gets q+1$
\EndWhile
\State \Return \NoRoute
\end{algorithmic}
\end{algorithm}

Under a task-consistent frozen snapshot,
$\widehat U_{T,i}^{+}(j,t)=\overline U_T(j)$ and
$\widehat\phi_i(j,T,t)=\phi_i(j,T)$ for every visible executor. Hence,
Algorithm~\ref{alg:spfr} reduces to the frozen rule analyzed in the following
theorems.

Here, $q$ counts nonterminal request-forwarding hops. Because the executor
check precedes the hop-budget check, a task reaching its executor after
$H_{\max}$ forwarding hops may still execute. We set
$H_{\max}=|V|-1$, matching the finite-termination bound under the
task-consistent frozen snapshot. If the stored next hop is missing or its
link is inactive in $E(t)$, \SPFR returns \NoRoute. The
$\operatorname{Forward}$ operation includes the underlay retry policy;
exhausting its retry limit returns $\mathrm{ForwardFailure}$ and does not
trigger semantic reselection. Under asynchronous control-plane updates,
$H_{\max}$ serves only as a runtime safeguard and does not extend the
frozen-state loop-freedom guarantee.

The per-hop computational cost is dominated by capability matching. For a
visible agent $j$, rectangular Hungarian matching requires
$O(p_jm_T\min\{p_j,m_T\})$ time, while padding the assignment matrix yields
the familiar $O(\max\{p_j,m_T\}^3)$ bound. Assuming constant-time similarity
access and a locally available ReplyLookup result, one decision at agent $i$
requires
$O(\sum_{j\in\mathcal V_i^H}p_jm_T\min\{p_j,m_T\}
+|\mathcal V_i^H|)$ time. The sum covers all visible descriptors because
eligibility is determined only after semantic matching and admission checks.
Bounded dissemination confines the state and computation to the
$H_{\rm ctrl}$-hop neighborhood, although its size may still increase with
local network density.

\section{Theoretical Analysis}

We show that candidate inheritance induces strict dominant-potential ascent,
which establishes loop freedom and finite termination. We then bound the
terminal executor's source-side $\mathrm{P2}$ gap relative to full visibility.

\subsection{Analytical Conditions}
\label{sec:analytical-conditions}

The following conditions support the analysis but are not required merely to
evaluate the local forwarding rule. The attenuation parameter $\omega_h>0$
remains fixed throughout.

\begin{assumption}[Task-consistent snapshot]
\label{ass:task-snapshot}
For each analyzed task $T$, the graph, agent descriptors, utility inputs,
semantic-matching results, reply descriptors, hop counts, and next hops are
obtained from one consistent control-plane snapshot and remain fixed during
the analysis.
\end{assumption}

\begin{assumption}[TTL completeness and next-hop closure]
\label{ass:ttl-closure}
For every agent $i$, the semantic FIB contains every advertised service
descriptor within $H_{\rm ctrl}$ hops, i.e.,
$\mathcal V_i^H=\{j\in V:h(i,j)\le H_{\rm ctrl}\}$, with the correct
shortest-hop distance. Moreover, if $j\ne i$ and $k=next_i(j)$, then
$vis_k^H(j)=1$ and $h_k(j)=h_i(j)-1$.
\end{assumption}

\begin{assumption}[Positive source attractor]
\label{ass:positive-source}
The source observes at least one eligible executor with positive utility,
i.e., $\mathcal E_{s_T}^{H,+}(T)\ne\emptyset$.
\end{assumption}

\subsection{Candidate Inheritance and Potential Ascent}

\begin{lemma}[Candidate inheritance]
\label{lem:inheritance}
Under Assumptions~\ref{ass:task-snapshot} and~\ref{ass:ttl-closure}, let
$j\in\mathcal E_i^H(T)$, $j\ne i$, and $k=next_i(j)$. Then
$j\in\mathcal E_k^H(T)$ and $h_k(j)=h_i(j)-1$. Moreover,
$j\in\mathcal E_i^{H,+}(T)$ implies $j\in\mathcal E_k^{H,+}(T)$.
\end{lemma}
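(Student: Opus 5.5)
The argument unpacks the definitions and uses next-hop closure from Assumption~\ref{ass:ttl-closure}. Since $j\in\mathcal E_i^H(T)=\mathcal A_T^0\cap\mathcal V_i^H$, we have $vis_i^H(j)=1$ and $j\in\mathcal A_T^0$. Because $j\ne i$ and $k=next_i(j)$, the closure clause of Assumption~\ref{ass:ttl-closure} gives directly that $vis_k^H(j)=1$ and $h_k(j)=h_i(j)-1$. Since $j\in V_{\rm svc}$, this yields $j\in\mathcal V_k^H$, which settles the hop-count claim.

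The next step is to show that eligibility transfers from $i$ to $k$. The key point is that $\mathcal A_T^0$ is defined without reference to the forwarding agent. Its conditions are activity, the capability indicator $\chi_j^{\rm cap}(T)$, $\rho_j^T<1$, trust, return-route validity $valid_j^{\rm ret}(s_T)$, and the two path-independent budget and deadline screens. Each of these depends only on $j$'s descriptor, the task, and the source $s_T$, never on $i$ or $k$. Under Assumption~\ref{ass:task-snapshot}, all agents read the same snapshot of descriptors and reply data and use the same matching and utility functions. Hence membership in $\mathcal A_T^0$ is a single fact, and $j\in\mathcal A_T^0\cap\mathcal V_k^H=\mathcal E_k^H(T)$.

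For the positive-utility clause, the gated utility $\overline U_T(j)$ is likewise location-independent: it equals $U_T(j)$ on $\mathcal A_T^0$ and is computed identically at every agent under the frozen snapshot. So $\overline U_T(j)>0$ at $i$ means the same value is positive at $k$, and together with $j\in\mathcal E_k^H(T)$ this gives $j\in\mathcal E_k^{H,+}(T)$.

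The only delicate point is justifying that visibility plus the stored distance at $k$ really come from the same valid FIB entry. That is, the closure clause must certify that $k$'s entry for $j$ is the same descriptor snapshot as $i$'s, so that the runtime-gated quantities $\widehat U^{+}_{T,k}$ agree with $\overline U_T$. This consistency is exactly what Assumption~\ref{ass:task-snapshot} supplies, so I expect no real obstacle beyond stating this explicitly. As a sanity check, $h_i(j)\ge1$ because $j\ne i$, so $h_k(j)=h_i(j)-1\ge0$, and $h_k(j)\le H_{\rm ctrl}$ is consistent with $j$ being a valid entry at $k$.
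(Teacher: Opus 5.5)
Your proof is correct and follows essentially the same argument as the paper: next-hop closure supplies the valid entry at $k$ with $h_k(j)=h_i(j)-1$, and because $\mathcal A_T^0$ and $\overline U_T$ do not depend on the forwarding agent under the task-consistent snapshot, both eligibility and positive gated utility carry over to $k$. Your ``delicate point'' about matching descriptor copies is not actually needed for the lemma as stated, since $\mathcal E_k^H(T)$ and $\mathcal E_k^{H,+}(T)$ are defined only through $\mathcal A_T^0$, $\mathcal V_k^H$, and $\overline U_T$.
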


\begin{proof}
Next-hop closure provides a valid entry for $j$ at $k$ and gives
$h_k(j)=h_i(j)-1\le H_{\rm ctrl}-1$. Executor eligibility, task utility, and
the reply descriptor are node-invariant within the task-consistent snapshot.
All eligibility conditions and the positive gated utility are therefore
preserved at $k$.
\end{proof}

\begin{theorem}[Strict dominant-potential ascent]
\label{thm:strict-ascent}
Under Assumptions~\ref{ass:task-snapshot} and~\ref{ass:ttl-closure}, every
nonterminal \SPFR step $i\to k$ satisfies
$\Phi_k^H(T)\ge\exp(\omega_h)\Phi_i^H(T)>\Phi_i^H(T)$.
\end{theorem}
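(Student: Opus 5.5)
A nonterminal step means that agent $i$ has a nonempty positive candidate set $\mathcal E_i^{H,+}(T)$. It selects $j^\star=j_i^{\rm dom}\ne i$ and forwards to $k=next_i(j^\star)$. The plan is to bound $\Phi_k^H(T)$ from below using the single witness $j^\star$, which Lemma~\ref{lem:inheritance} carries from $i$ to $k$. Because $\Phi_k^H(T)$ is a maximum, one witness suffices, and no argument about which executor $k$ itself ends up selecting is needed.

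Concretely, I would proceed in four steps.

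\emph{Step 1.} Under Assumption~\ref{ass:task-snapshot}, Algorithm~\ref{alg:spfr} reduces to the frozen rule, as noted after the algorithm. Since the step is nonterminal, $\mathcal E_i^{H,+}(T)\ne\emptyset$ and $j^\star\ne i$. Hence $\Phi_i^H(T)=\phi_i(j^\star,T)=\overline U_T(j^\star)\exp[-\omega_h h_i(j^\star)]>0$.

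\emph{Step 2.} Apply Lemma~\ref{lem:inheritance} with $j=j^\star$. Since $j^\star\in\mathcal E_i^{H,+}(T)$, we get $j^\star\in\mathcal E_k^{H,+}(T)$ and $h_k(j^\star)=h_i(j^\star)-1$.

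\emph{Step 3.} The gated utility $\overline U_T(j^\star)$ is node-invariant under the snapshot. Therefore
\[
\Phi_k^H(T)\ge\phi_k(j^\star,T)=\overline U_T(j^\star)\exp[-\omega_h(h_i(j^\star)-1)]=\exp(\omega_h)\,\Phi_i^H(T).
\]

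\emph{Step 4.} The strict inequality $\exp(\omega_h)\Phi_i^H(T)>\Phi_i^H(T)$ follows from $\omega_h>0$ together with $\Phi_i^H(T)>0$ from Step 1.

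The only delicate point is making sure the frozen quantities are used consistently. Specifically, $\overline U_T(j^\star)$ must be identical when computed at $i$ and at $k$, and the stored distance $h_k(j^\star)$ must be the decremented value guaranteed by next-hop closure. Both facts are supplied by Assumption~\ref{ass:task-snapshot} and Lemma~\ref{lem:inheritance}.

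Tie-breaking by executor identifier plays no role in this argument, because the bound holds for whichever maximizer is chosen. Assumption~\ref{ass:positive-source} is not needed either, because nonterminality already gives a positive attractor at $i$.
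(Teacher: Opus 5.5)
Your proof is correct and takes essentially the same approach as the paper's. Both use the dominant executor $j_i^{\rm dom}$ as a single witness at $k$, apply Lemma~\ref{lem:inheritance} to keep it a positive candidate with hop count $h_i(j)-1$, and obtain strictness from $\omega_h>0$ and $\Phi_i^H(T)>0$.
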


\begin{proof}
Let $j=j_i^{\rm dom}$. The forwarding rule gives
$\phi_i(j,T)=\Phi_i^H(T)>0$. By Lemma~\ref{lem:inheritance}, $j$ remains a
positive candidate at $k$, while its remaining hop count decreases by one.
Consequently,
$\Phi_k^H(T)\ge\overline U_T(j)\exp[-\omega_h(h_i(j)-1)]
=\exp(\omega_h)\phi_i(j,T)=\exp(\omega_h)\Phi_i^H(T)$.
The inequality is strict because $\omega_h>0$ and $\Phi_i^H(T)>0$.
\end{proof}

\subsection{Loop Freedom and Finite Termination}

\begin{theorem}[Loop freedom and finite termination]
\label{thm:loop-free}
Under Assumptions~\ref{ass:task-snapshot}--\ref{ass:positive-source}, \SPFR
is loop-free and terminates at an eligible executor after at most $|V|-1$
nonterminal forwarding steps.
\end{theorem}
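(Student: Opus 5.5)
We will prove the theorem by turning the strict ascent of Theorem~\ref{thm:strict-ascent} into a potential-function argument along the forwarding trajectory $i_0=s_T,i_1,i_2,\ldots$, with the last step supplied by a hop-count argument.

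\textbf{Nonemptiness.} First we show that every visited agent has a nonempty positive candidate set, so \NoRoute{} is never returned. Assumption~\ref{ass:positive-source} gives $\mathcal E_{i_0}^{H,+}(T)\ne\emptyset$. Suppose $\mathcal E_{i_q}^{H,+}(T)\ne\emptyset$ and step $q$ is nonterminal. Then the dominant executor $j$ satisfies $j\ne i_q$. By Lemma~\ref{lem:inheritance}, $j\in\mathcal E_{i_{q+1}}^{H,+}(T)$. Next-hop closure in Assumption~\ref{ass:ttl-closure} guarantees that $next_{i_q}(j)$ is a valid agent. Under Assumption~\ref{ass:task-snapshot} the link to it is present in the frozen graph, so the stale-hop and forwarding-failure branches of Algorithm~\ref{alg:spfr} do not fire. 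Hence $\Phi_{i_q}^H(T)>0$ at every visited agent.

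\textbf{Loop freedom.} Theorem~\ref{thm:strict-ascent} then gives
\[
\Phi_{i_{q+1}}^H(T)\ge e^{\omega_h}\Phi_{i_q}^H(T)>\Phi_{i_q}^H(T)
\]
for every nonterminal step. The sequence $\Phi_{i_q}^H(T)$ is therefore strictly increasing. Under the frozen snapshot, $\Phi_i^H(T)$ is a fixed function of the agent $i$ alone. So if $i_q=i_r$ with $q<r$, then $\Phi_{i_q}^H(T)=\Phi_{i_r}^H(T)$, which contradicts strict increase. All visited agents are therefore distinct, and the route is loop-free.

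\textbf{Termination.} Because $V$ is finite and the visited agents are distinct, the trajectory contains at most $|V|$ agents and hence at most $|V|-1$ nonterminal forwarding steps. This also shows that the hop-budget branch with $H_{\max}=|V|-1$ never fires before the executor check succeeds. The loop can therefore stop only through $j_i^{\rm dom}=i$. At that point $i\in\mathcal E_i^{H,+}(T)\subseteq\mathcal A_T^0$ with $\overline U_T(i)>0$, so $i$ is an eligible executor.

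\textbf{Main obstacle.} The delicate point is not the ascent itself but the bookkeeping. We must confirm that no \NoRoute{} exit occurs, which means justifying that the inherited candidate keeps the set nonempty and that the frozen next hop is an operational adjacency. We must also check that the hop-budget check is ordered so that reaching the executor at exactly step $|V|-1$ still executes. We would handle both by appealing to the snapshot and closure assumptions, as in the nonemptiness step above.
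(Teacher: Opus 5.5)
Your proof is correct and follows the paper's argument. A revisited agent would have the same frozen dominant potential, contradicting the strict ascent of Theorem~\ref{thm:strict-ascent}, and candidate inheritance keeps a positive candidate at every hop, so the trajectory is a simple path of at most $|V|-1$ edges that ends in self-selection; your added bookkeeping on the hop-budget ordering and the \NoRoute{} branches of Algorithm~\ref{alg:spfr} spells out what the paper leaves implicit in the frozen rule.
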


\begin{proof}
If an agent were revisited, the task-consistent snapshot would give it the
same FIB, positive candidate set, utility values, and dominant local
potential as before, contradicting Theorem~\ref{thm:strict-ascent}. Hence,
the forwarding trajectory cannot revisit an agent. Candidate inheritance
also preserves at least one positive candidate after every nonterminal step.
The trajectory is therefore a simple path in a finite graph, contains at most
$|V|-1$ edges, and can terminate only when the current agent selects itself
as the dominant eligible executor.
\end{proof}

\subsection{Full and Bounded Visibility}
\label{sec:visibility-analysis}

\begin{theorem}[Full-visibility $\mathrm{P2}$ optimality]
\label{thm:full-p2}
Under Assumptions~\ref{ass:task-snapshot}--\ref{ass:positive-source}, suppose
that $\Psi_s^*(T)>0$ and $H_{\rm ctrl}\ge\operatorname{diam}(G)$. If \SPFR
terminates at executor $m$, then
$m\in\operatorname*{arg\,max}_{j\in\mathcal A_T^0}
\overline U_T(j)\exp[-\omega_h h(s_T,j)]$.
\end{theorem}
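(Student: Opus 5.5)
Under full visibility every agent sees every eligible executor, so the local potential at the source already equals the global $\mathrm{P2}$ objective. The plan is to show that the executor chosen at the source, $j^\star:=j_{s_T}^{\rm dom}$, is a $\mathrm{P2}$ maximizer. We then show that \SPFR never abandons it, or abandons it only for another maximizer, so the terminal executor $m$ is one as well.

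\textbf{Step 1 (source realizes $\mathrm{P2}$).} Since $H_{\rm ctrl}\ge\operatorname{diam}(G)$ and $G$ is connected, every $j\in V_{\rm svc}$ satisfies $h(s_T,j)\le H_{\rm ctrl}$. Assumption~\ref{ass:ttl-closure} then gives $\mathcal V_{s_T}^H=V_{\rm svc}$ with $h_{s_T}(j)=h(s_T,j)$, and hence $\mathcal E_{s_T}^H(T)=\mathcal A_T^0$. Executors with $\overline U_T(j)=0$ contribute zero to $\Psi_s^*(T)$, which is positive, so they are not maximizers. Therefore $\Phi_{s_T}^H(T)=\Psi_s^*(T)$, and $j^\star$ attains $\Psi_s^*(T)$. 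The same reasoning at any agent $i$ gives $\mathcal E_i^H(T)=\mathcal A_T^0$ and $h_i(j)=h(i,j)$.

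\textbf{Step 2 (invariant along the trajectory).} By Theorem~\ref{thm:loop-free}, \SPFR follows a simple path $s_T=i_0,i_1,\dots,i_q=m$, and $m$ selects itself. I would prove by induction that the dominant executor $j_r:=j_{i_r}^{\rm dom}$ at each $i_r$ is a $\mathrm{P2}$ maximizer and that $i_r$ lies on a shortest path from $s_T$ to $j_r$, i.e.\ $h(s_T,i_r)+h(i_r,j_r)=h(s_T,j_r)$. For the inductive step, Lemma~\ref{lem:inheritance} gives $h(i_{r+1},j_r)=h(i_r,j_r)-1$, so the shortest-path property still holds at $i_{r+1}$ for $j_r$.

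Let $j'$ be the dominant executor at $i_{r+1}$; we must show $j'$ is again a maximizer. Because $j'$ wins at $i_{r+1}$,
\[
\overline U_T(j')e^{-\omega_h h(i_{r+1},j')}\ge \overline U_T(j_r)e^{-\omega_h h(i_{r+1},j_r)}.
\]
Multiply both sides by $e^{-\omega_h h(s_T,i_{r+1})}$ and use $h(s_T,j')\le h(s_T,i_{r+1})+h(i_{r+1},j')$, the triangle inequality. The left side then becomes at most $\overline U_T(j')e^{-\omega_h h(s_T,j')}$, and the right side becomes exactly $\Psi_s^*(T)$. So $j'$ is also a maximizer.

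The product $\overline U_T(j')e^{-\omega_h h(s_T,j')}$ cannot exceed $\Psi_s^*(T)$, so every inequality in that chain is tight. In particular the triangle inequality holds with equality, so $i_{r+1}$ lies on a shortest $s_T$--$j'$ path and the invariant is restored.

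\textbf{Step 3 (conclusion and main difficulty).} At termination $m$ selects itself, so $m=j_q$ is a maximizer, which is the claim. The main obstacle is Step 2. A downstream agent can switch executors, and its potentials are measured from its own position rather than from $s_T$. The key device is to carry the shortest-path invariant $h(s_T,i_r)+h(i_r,j_r)=h(s_T,j_r)$ through the induction, since this is what converts local optimality back into source-side optimality. Deterministic identifier tie-breaking is not needed: ties only produce another element of the $\arg\max$, which the statement allows.
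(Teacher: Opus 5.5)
Your proof is correct, but it takes a different route from the paper. The paper never tracks which executor is dominant at intermediate hops. It telescopes the strict-ascent theorem (Theorem~\ref{thm:strict-ascent}) over the whole trajectory to get $\overline U_T(m)=\Phi_m^H(T)\ge e^{\omega_h L_T}\Phi_{s_T}^H(T)=e^{\omega_h L_T}\Psi_s^*(T)$, where $L_T$ is the number of forwarding steps. It then uses $h(s_T,m)\le L_T$ to obtain $\overline U_T(m)e^{-\omega_h h(s_T,m)}\ge\Psi_s^*(T)$, and equality follows because $m\in\mathcal A_T^0$. You instead carry an exact invariant: the current dominant executor $j_r$ is a $\mathrm{P2}$ maximizer, and $i_r$ lies on a shortest $s_T$--$j_r$ path. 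You restore it hop by hop through the triangle inequality and a tightness argument. The paper's version is shorter and reuses an already proved theorem. It also carries over verbatim to the bounded-visibility case (Theorem~\ref{thm:bounded-p2}, with $\Psi_s^*(T)$ replaced by $V_T^*$). Your exact invariant, by contrast, relies on tightness that holds only under full visibility. In return, your argument records a finer structural fact that the paper's proof leaves implicit. Under full visibility, in-path reselection only ever moves among tied $\mathrm{P2}$ maximizers, and every forwarding agent lies on a shortest path from the source to its current target. Two small steps should be made explicit. Re-establishing $h(s_T,i_{r+1})+h(i_{r+1},j_r)=h(s_T,j_r)$ needs $h(s_T,i_{r+1})\le h(s_T,i_r)+1$, that is, adjacency of consecutive agents, in addition to Lemma~\ref{lem:inheritance}. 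Converting tightness into equality in the triangle inequality uses $\overline U_T(j')>0$. Citing the loop-freedom theorem is unnecessary, since termination at $m$ is a hypothesis and your induction works along any finite trajectory.
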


\begin{proof}
Full visibility gives $\mathcal V_{s_T}^H=V$,
$\mathcal E_{s_T}^H(T)=\mathcal A_T^0$, and
$\Phi_{s_T}^H(T)=\Psi_s^*(T)$. Let $L_T$ be the number of nonterminal
forwarding steps. At termination, $m$ selects itself, so
$\Phi_m^H(T)=\phi_m(m,T)=\overline U_T(m)$. Repeated application of
Theorem~\ref{thm:strict-ascent} gives
$\overline U_T(m)\ge\exp(\omega_h L_T)\Psi_s^*(T)$.

Because $h(s_T,m)\le L_T$, the terminal source-side potential satisfies
$\overline U_T(m)\exp[-\omega_h h(s_T,m)]
\ge\overline U_T(m)\exp(-\omega_h L_T)\ge\Psi_s^*(T)$.
This value is feasible for Problem~$\mathrm{P2}$ and therefore cannot exceed
$\Psi_s^*(T)$. Equality follows, proving the claim.
\end{proof}

\begin{theorem}[Bounded-visibility additive $\mathrm{P2}$ error]
\label{thm:bounded-p2}
Assume the task-consistent snapshot in Assumption~\ref{ass:task-snapshot},
$0\le\overline U_T(j)\le U_{\max}$, and $\Psi_s^*(T)>0$. Suppose that the
source FIB is TTL-complete within $H_{\rm ctrl}$ hops, every forwarding step
satisfies next-hop closure, and
$\mathcal E_{s_T}^{H,+}(T)\ne\emptyset$.

Define the best source-visible potential as
$V_T^*=\max\{\overline U_T(j)\exp[-\omega_h h(s_T,j)]:
j\in\mathcal A_T^0,\ h(s_T,j)\le H_{\rm ctrl}\}$. If \SPFR terminates at
executor $m$, then
$\overline U_T(m)\exp[-\omega_h h(s_T,m)]\ge V_T^*$.
Furthermore,
$0\le\mathrm{P2Gap}_T\le
U_{\max}\exp[-\omega_h(H_{\rm ctrl}+1)]$, where
$\mathrm{P2Gap}_T=\Psi_s^*(T)-
\overline U_T(m)\exp[-\omega_h h(s_T,m)]$.
\end{theorem}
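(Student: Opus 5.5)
The proof mirrors that of Theorem~\ref{thm:full-p2}, with the source's bounded view replacing full visibility. First I identify the source potential with $V_T^*$. TTL completeness at $s_T$ together with Assumption~\ref{ass:task-snapshot} gives $\mathcal E_{s_T}^H(T)=\{j\in\mathcal A_T^0:h(s_T,j)\le H_{\rm ctrl}\}$ and $h_{s_T}(j)=h(s_T,j)$. Hence $\Phi_{s_T}^H(T)=V_T^*$. Since $\mathcal E_{s_T}^{H,+}(T)\ne\emptyset$, this value is positive. Executors with $\overline U_T(j)=0$ contribute zero, so including them in the maximum defining $V_T^*$ changes nothing.

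Next I establish ascent along the trajectory. Theorem~\ref{thm:strict-ascent} is stated under Assumption~\ref{ass:ttl-closure} for every agent. Here only source TTL completeness and per-step next-hop closure are assumed, so I would re-derive the ascent directly from the inheritance argument of Lemma~\ref{lem:inheritance}. At a nonterminal step $i\to k$ with $j=j_i^{\rm dom}$, closure gives $vis_k^H(j)=1$ and $h_k(j)=h_i(j)-1$. Eligibility and utility are node-invariant in the snapshot, so $j\in\mathcal E_k^{H,+}(T)$ and $\Phi_k^H(T)\ge e^{\omega_h}\Phi_i^H(T)$. This needs no TTL completeness at $k$, only a lower bound from the inherited candidate.

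\emph{Obstacle.} At intermediate nodes the stored $h_k(j)$ may not equal $h(k,j)$. However, closure makes $h_k(j_k^{\rm dom})$ the exact remaining forwarding length along the subsequent trajectory. So the self-selection terminal value $\Phi_m^H(T)=\overline U_T(m)$ still satisfies $\overline U_T(m)\ge e^{\omega_h L_T}V_T^*$. Here $L_T$ is the number of forwarding steps; termination at an executor is assumed in the statement. Because the trajectory is a walk from $s_T$ to $m$, $h(s_T,m)\le L_T$. This yields the first claim, $\overline U_T(m)e^{-\omega_h h(s_T,m)}\ge V_T^*$.

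For the gap, nonnegativity follows because $m\in\mathcal A_T^0$ (it has positive gated utility) is feasible for Problem~$\mathrm{P2}$, so its value is at most $\Psi_s^*(T)$. For the upper bound I split the maximizer $j^*$ of $\Psi_s^*(T)$ into two cases. If $h(s_T,j^*)\le H_{\rm ctrl}$, then $\Psi_s^*(T)=V_T^*$ and the gap is at most $0$. Otherwise $h(s_T,j^*)\ge H_{\rm ctrl}+1$, so $\Psi_s^*(T)\le U_{\max}e^{-\omega_h(H_{\rm ctrl}+1)}$. The gap is then at most $\Psi_s^*(T)-V_T^*\le\Psi_s^*(T)$, which gives the bound since $V_T^*\ge0$. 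The only delicate point is the ascent step without global TTL completeness, and the per-step closure hypothesis is exactly what makes it go through.
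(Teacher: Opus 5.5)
Your proof is correct and is essentially the paper's argument: identify $\Phi_{s_T}^H(T)$ with $V_T^*$ via source TTL completeness, chain the per-step ascent and terminal self-selection to get $\overline U_T(m)\ge e^{\omega_h L_T}V_T^*$, use $h(s_T,m)\le L_T$, and split $\Psi_s^*(T)$ into source-visible and source-invisible parts (your case split on the maximizer $j^*$ is the paper's $\max\{V_T^*,I_T^*\}$ step); your re-derivation of ascent from per-step closure alone is a sensible tightening of the paper's direct appeal to Theorem~\ref{thm:strict-ascent}. The only blemish is your side remark that $h_k(j_k^{\rm dom})$ is the exact remaining length of the realized trajectory, which fails when a later hop reselects a farther but stronger executor; it is harmless, because iterating the ascent inequality already gives the bound without it.
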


\begin{proof}
Let
$a_s\in\operatorname*{arg\,max}_{j\in\mathcal E_{s_T}^{H,+}(T)}
\phi_{s_T}(j,T)$, and let $L_T$ be the realized trajectory length. Strict
ascent and terminal self-selection give
$\overline U_T(m)\ge
\exp(\omega_h L_T)\phi_{s_T}(a_s,T)$. Because $h(s_T,m)\le L_T$, it follows
that
$\overline U_T(m)\exp[-\omega_h h(s_T,m)]
\ge\phi_{s_T}(a_s,T)=V_T^*$, where the equality follows from source-side TTL
completeness.

Define the best source-invisible potential as
$I_T^*=\max\{\overline U_T(j)\exp[-\omega_h h(s_T,j)]:
j\in\mathcal A_T^0,\ h(s_T,j)>H_{\rm ctrl}\}$, with the maximum of an empty
set defined as zero. Since hop distance is integer-valued, every
source-invisible executor satisfies $h(s_T,j)\ge H_{\rm ctrl}+1$, and hence
$I_T^*\le U_{\max}\exp[-\omega_h(H_{\rm ctrl}+1)]$.

By construction, $\Psi_s^*(T)=\max\{V_T^*,I_T^*\}$. The terminal value is at
least $V_T^*$ and, because $m\in\mathcal A_T^0$, cannot exceed
$\Psi_s^*(T)$. Therefore,
$0\le\mathrm{P2Gap}_T\le
\max\{V_T^*,I_T^*\}-V_T^*\le I_T^*
\le U_{\max}\exp[-\omega_h(H_{\rm ctrl}+1)]$.
\end{proof}

For tasks with $\Psi_s^*(T)>0$, define
$\mathrm{P2Ratio}_T=
\overline U_T(m)\exp[-\omega_h h(s_T,m)]/\Psi_s^*(T)$ when \SPFR terminates
at $m$, and $\mathrm{P2Ratio}_T=0$ when it returns \NoRoute. Assigning zero
to \NoRoute is an all-task evaluation convention rather than a theoretical
guarantee. When $\Psi_s^*(T)=0$, the ratio is undefined and excluded from
ratio statistics.

For tasks covered by Theorem~\ref{thm:bounded-p2}, define
$\mathrm{NormP2Gap}_T=\mathrm{P2Gap}_T/
\{U_{\max}\exp[-\omega_h(H_{\rm ctrl}+1)]\}$. The theorem gives
$0\le\mathrm{NormP2Gap}_T\le1$, whereas
Theorem~\ref{thm:full-p2} gives $\mathrm{P2Ratio}_T=1$ under full visibility.

\begin{remark}[Scope of the guarantees]
Theorem~\ref{thm:full-p2} establishes exact realization of
Problem~$\mathrm{P2}$ only under full semantic visibility.
Theorem~\ref{thm:bounded-p2} bounds the source-side $\mathrm{P2}$ potential,
not executor utility or the approximation ratio for Problem~$\mathrm{P1}$.
If the source observes no eligible executor with positive utility, \SPFR
returns \NoRoute. The forwarding rule remains operational when the FIB
changes during forwarding, but the frozen-snapshot guarantees no longer
apply.
\end{remark}

\section{Evaluation}
\subsection{Setup}
We implement a Python simulator for distributed task forwarding under dynamic
executor and network states. All methods are evaluated on GEANT, UNINETT, and
Deltacom, with one agent hosted at each node. The service catalog contains 24
atomic capabilities grouped into eight semantic domains. Small, medium, and
large agents constitute $55\%$, $30\%$, and $15\%$ of the population and
advertise 5, 10, and 15 capabilities, respectively. The tiers differ in
semantic coverage, service rate, load, and price. Their task-independent
core--edge placement includes both regional and cross-region capabilities.

Light, standard, and complex tasks occur with probabilities $0.45$, $0.40$,
and $0.15$ and require 2, 2--3, and 3--4 capabilities, respectively.
Computational workload, request size, deadline, and budget span $[0.4,3.5]$,
$[0.15,1.40]$, $[2.5,14.0]$, and $[1.5,24.0]$, respectively. The Hungarian
algorithm performs one-to-one requirement--capability matching, excluding
pairs below the $0.55$ relevance threshold. Task sources and requirements are
sampled independently of executor visibility and routing outcomes.

Unless stated otherwise, $H_{\rm ctrl}=2$, $\omega_h=0.08$, $B=3.5$, the
Poisson arrival rate is $0.8$, and
$(w_s,w_l,w_p)=(0.70,0.15,0.15)$. We set $H_{\max}=|V|-1$, consistent with
the frozen-state loop-freedom bound. The main experiment includes dynamic
queues, service departures and recoveries, link perturbations, background
traffic, and stale control-plane state. Each topology uses 10 paired seeds and
200 tasks per seed. Within each topology--seed block, all methods replay the
same catalog, workload, and exogenous events. Cross-topology results first
average the three topologies within each seed. We then compute paired method
contrasts and two-sided 95\% Student-$t$ confidence intervals across seeds,
without treating tasks within a seed as independent replicates.

We compare RAND, D-SEM, D-GREEDY, \SPFR{}, and GLOBAL$^\dagger$. All methods
apply the same estimated-eligibility predicate within their available views.
RAND, D-SEM, D-GREEDY, and \SPFR{} share the same bounded local FIB, whereas
GLOBAL$^\dagger$ performs request-triggered full discovery. RAND fixes a
randomly selected eligible source-FIB executor. D-SEM and D-GREEDY reselect at
each hop by maximizing $S(j,T)$ and $\overline U_T(j)$, respectively, whereas
\SPFR{} maximizes semantic potential. GLOBAL$^\dagger$ fixes the full-view
potential maximizer and serves only as a high-information reference. A method
returns \NoRoute{} when its post-gate candidate set is empty. Tasks are not
prefiltered by source visibility or routing outcome.

Let $Y_T=1$ when task $T$ completes execution, returns its result, and
satisfies the realized semantic, deadline, and budget requirements; otherwise,
$Y_T=0$. We report all-task realized \emph{Utility} as
$\overline U^{\rm real}=|\mathcal T|^{-1}
\sum_{T\in\mathcal T}Y_TU_T(e_T)$, so failed and \NoRoute{} tasks contribute
zero. Success is $|\mathcal T|^{-1}\sum_{T\in\mathcal T}Y_T$.
Request-triggered messages include task forwarding, result return, and
on-demand discovery and are normalized over all tasks. Semantic similarity,
load, price, hops, P95 delay, and communication cost are conditioned on
successful tasks. Periodic semantic beacons are audited separately as shared
control-plane overhead. Unless stated otherwise, the remaining experiments
use UNINETT.

\begin{table}[t]
\centering
\caption{Cross-topology performance over 10 paired seeds. Utility, Success,
and Msgs./task cover all tasks; other metrics cover successful tasks.}
\label{tab:overall-performance}
\scriptsize
\setlength{\tabcolsep}{0pt}
\renewcommand{\arraystretch}{0.92}
\begin{tabularx}{0.97\columnwidth}{
@{}l
*{8}{>{\centering\arraybackslash}X}@{}}
\toprule
\multirow{2}{*}{Method}
& \multirow{2}{*}{Utility}
& Success
& Msgs.
& Sem.
& \multirow{2}{*}{Load}
& Price
& \multirow{2}{*}{Hops}
& P95 \\
& & (\%) & /task & sim. & & (c.u.) & & (s) \\
\midrule
\multicolumn{9}{c}{\textit{GEANT (40 nodes, 61 links)}} \\
\cmidrule(lr){1-9}
RAND    & 0.679 & 91.8 & 4.08 & 0.769 & 0.361 & 2.144 & 1.70 & 2.69 \\
D-SEM   & 0.711 & 88.0 & 5.24 & 0.920 & 0.452 & 3.264 & 2.41 & 3.97 \\
D-GREEDY & 0.764 & 90.3 & 5.35 & 0.884 & 0.288 & 1.391 & 2.43 & 2.11 \\
SPFR    & 0.751 & 91.1 & 3.87 & 0.867 & 0.324 & 1.599 & 1.59 & 2.03 \\
GLOBAL$^\dagger$  & 0.766 & 92.3 & 136.92 & 0.875 & 0.324 & 1.619 & 1.67 & 2.07 \\
\midrule
\multicolumn{9}{c}{\textit{UNINETT (74 nodes, 101 links)}} \\
\cmidrule(lr){1-9}
RAND    & 0.675 & 90.9 & 3.96 & 0.773 & 0.365 & 2.125 & 1.65 & 2.88 \\
D-SEM   & 0.710 & 87.9 & 5.42 & 0.920 & 0.458 & 3.249 & 2.53 & 4.16 \\
D-GREEDY & 0.756 & 89.7 & 5.15 & 0.883 & 0.304 & 1.436 & 2.34 & 2.13 \\
SPFR    & 0.746 & 90.9 & 3.78 & 0.863 & 0.328 & 1.655 & 1.55 & 2.07 \\
GLOBAL$^\dagger$  & 0.761 & 92.0 & 277.34 & 0.874 & 0.328 & 1.650 & 1.67 & 2.07 \\
\midrule
\multicolumn{9}{c}{\textit{Deltacom (113 nodes, 161 links)}} \\
\cmidrule(lr){1-9}
RAND    & 0.671 & 90.1 & 3.90 & 0.762 & 0.342 & 1.861 & 1.63 & 2.43 \\
D-SEM   & 0.712 & 88.1 & 5.35 & 0.899 & 0.417 & 2.895 & 2.48 & 4.23 \\
D-GREEDY & 0.752 & 89.8 & 4.99 & 0.873 & 0.296 & 1.423 & 2.24 & 2.17 \\
SPFR    & 0.743 & 91.0 & 3.63 & 0.849 & 0.310 & 1.500 & 1.46 & 1.93 \\
GLOBAL$^\dagger$  & 0.767 & 93.2 & 501.01 & 0.857 & 0.310 & 1.497 & 1.59 & 2.00 \\
\bottomrule
\end{tabularx}
\vspace{-1mm}
\end{table}

\begin{figure}[t]
\centering
\begin{minipage}[t]{0.49\columnwidth}
\vspace{0pt}
\centering
\includegraphics[width=\linewidth]{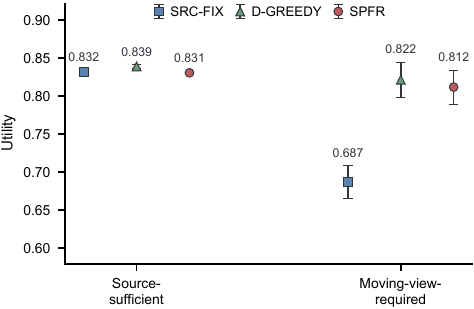}
\par\vspace{0.4mm}
{\footnotesize (a) Utility\par}
\end{minipage}\hfill
\begin{minipage}[t]{0.49\columnwidth}
\vspace{0pt}
\centering
\includegraphics[width=\linewidth]{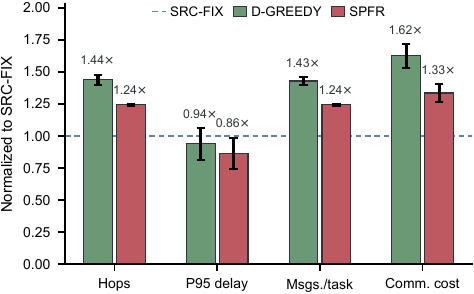}
\par\vspace{0.4mm}
{\footnotesize (b) Network expenditure\par}
\end{minipage}
\vspace{-1mm}
\caption{Benefit and network expenditure of in-path reselection on UNINETT.
(a) Utility for source-sufficient and moving-view-required tasks.
(b) Completed-task network expenditure normalized to SRC-FIX.}
\label{fig:moving-view-tradeoff}
\end{figure}

\begin{figure*}[htbp]
\centering
\begin{minipage}[t]{0.49\textwidth}
\vspace{0pt}
\centering
\includegraphics[width=\linewidth]{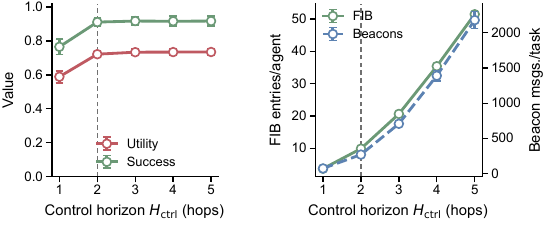}
\par\vspace{0.5mm}
{\footnotesize (a) Control-horizon sensitivity\par}
\end{minipage}\hfill
\begin{minipage}[t]{0.49\textwidth}
\vspace{0pt}
\centering
\includegraphics[width=\linewidth]{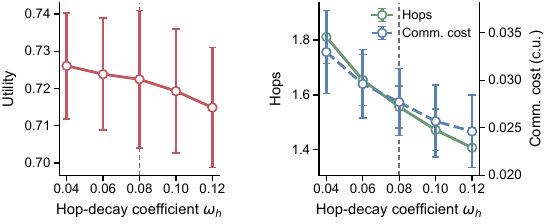}
\par\vspace{0.5mm}
{\footnotesize (b) Hop-decay sensitivity\par}
\end{minipage}
\vspace{-1mm}
\caption{Sensitivity to (a) the control horizon $H_{\rm ctrl}$ and
(b) the hop-decay coefficient $\omega_h$.}
\label{fig:parameter-sensitivity}
\vspace{-2mm}
\end{figure*}

\begin{figure}[t]
\centering
\includegraphics[width=0.4\columnwidth]{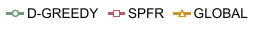}
\par\vspace{-3mm}
\begin{minipage}[t]{0.49\columnwidth}
\vspace{0pt}
\centering
\includegraphics[width=\linewidth]{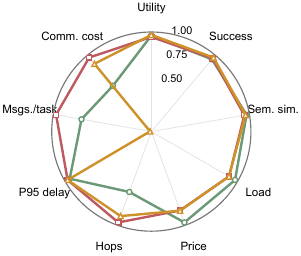}
\par\vspace{0.5mm}
{\footnotesize (a) Nominal dynamics\par}
\end{minipage}\hfill
\begin{minipage}[t]{0.49\columnwidth}
\vspace{0pt}
\centering
\includegraphics[width=\linewidth]{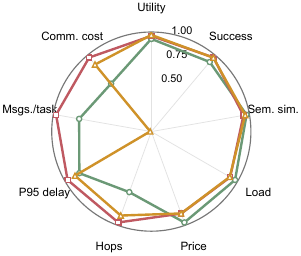}
\par\vspace{0.5mm}
{\footnotesize (b) High joint stress\par}
\end{minipage}
\caption{Normalized within-regime trade-offs under (a) nominal dynamics and
(b) high joint stress. Benefit metrics use $x/\max x$, whereas cost metrics
use $\min x/x$; larger values are better. Normalization is performed
separately within each regime.}
\label{fig:runtime-robustness}
\end{figure}

\subsection{Main Results}
Table~\ref{tab:overall-performance} shows that \SPFR maintained competitive
Utility while consistently reducing request-path overhead across all three
topologies. Based on the seed-wise paired macro-differences, \SPFR had
$1.43\%$ lower Utility than D-GREEDY (95\% CI: $0.79$--$2.07\%$ lower), but
improved Success by $1.10$ percentage points (95\% CI:
$0.59$--$1.61$ points). Among successful tasks, it reduced forward hops and
P95 delay by $34.34\%$ (95\% CI: $33.03$--$35.66\%$) and $5.79\%$
(95\% CI: $2.97$--$8.62\%$), respectively; across all tasks, it reduced
request-triggered messages by $27.24\%$ (95\% CI:
$26.02$--$28.47\%$). These directly measured endpoints, rather than the
composite SPFR potential score, independently quantify network expenditure.
Relative to GLOBAL$^\dagger$, \SPFR retained $97.63\%$ of its Utility
(95\% CI: $96.59$--$98.68\%$) while using $81.24\times$ fewer
request-triggered messages (95\% CI: $78.60$--$83.88\times$). The shared
beacon load averaged $292.03$ messages per task, yielding total signaling
loads of $295.79$, $297.20$, and $597.12$ for \SPFR, D-GREEDY, and
GLOBAL$^\dagger$, respectively. Thus, the $81.24\times$ comparison applies
specifically to request-triggered signaling.

To isolate in-path reselection, we conducted a separate static experiment on
UNINETT using 10 paired seeds, each containing 100 source-sufficient (Local)
and 100 moving-view-required (Discovery) tasks. Local tasks contained no
superior source-hidden executor. Discovery tasks instead exposed, within one
or two hops, a source-invisible executor with at least $5\%$ higher frozen
potential. All methods replayed identical tasks and FIB snapshots. The
balanced split was designed for mechanism identification rather than as an
estimate of deployment prevalence. Figure~\ref{fig:moving-view-tradeoff}
shows that \SPFR{} and SRC-FIX achieved nearly identical utility on Local
tasks ($0.831$ versus $0.832$), indicating that reselection provided no
artificial gain when the source view was already sufficient. On Discovery
tasks, however, \SPFR{} improved utility by $18.19\%$, directly isolating the
benefit of the moving local view. D-GREEDY achieved $1.18\%$ higher utility
than \SPFR{}, but incurred $13.66\%$ more hops, $7.99\%$ higher P95 delay,
$13.00\%$ more request messages, and $17.91\%$ higher communication cost.
The comparison therefore exposes the intended role of hop attenuation:
\SPFR sacrifices a small amount of executor utility to avoid pursuing
distant candidates whose marginal utility does not justify their network
cost.

\begin{figure}[t]
\centering

\begin{minipage}[t]{0.49\columnwidth}
\vspace{0pt}
\centering
\includegraphics[width=\linewidth]
{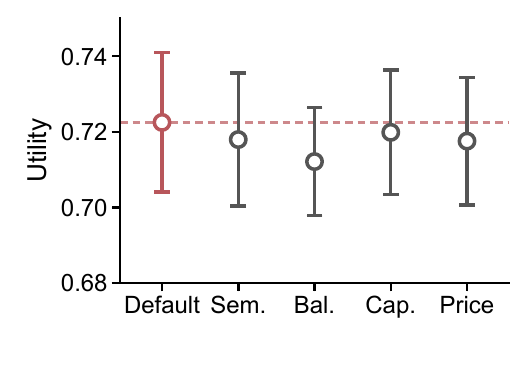}
\par\vspace{0.5mm}
{\footnotesize (a) Reference utility\par}
\end{minipage}
\hfill
\begin{minipage}[t]{0.49\columnwidth}
\vspace{0pt}
\centering
\includegraphics[width=\linewidth]
{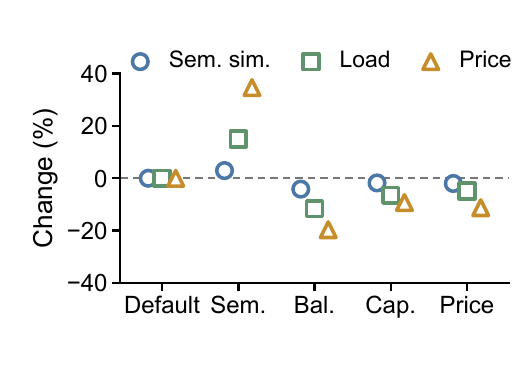}
\par\vspace{0.5mm}
{\footnotesize (b) Component shifts\par}
\end{minipage}

\vspace{-1mm}
\caption{Sensitivity to semantic, capacity, and price weights.}
\label{fig:weight-sensitivity}
\end{figure}

\begin{table}[t]
\centering
\caption{Mean seed-wise paired changes relative to full SPFR on the
independent UNINETT ablation workload.}
\label{tab:design-ablation}
\scriptsize
\setlength{\tabcolsep}{0.45pt}
\renewcommand{\arraystretch}{0.94}

\begin{tabularx}{0.97\columnwidth}{
@{}l
*{7}{>{\centering\arraybackslash}X}@{}}
\toprule
\multicolumn{1}{c}{\multirow{2}{*}{Variant}}
& $\Delta U$
& $\Delta$Succ.
& $\Delta$Msgs.
& \multirow{2}{*}{$\Delta S$}
& \multirow{2}{*}{$\Delta\rho$}
& $\Delta$Price
& $\Delta$Hops \\
& (\%)
& (pp)
& (\%)
&
&
& (c.u.)
& (\%) \\
\midrule

No reselection
& -1.10 & +0.25 & -5.2 & -0.013 & +0.005 & +0.070 & -7.1 \\

No decay
& +1.69 & -1.45 & +39.2 & +0.030 & -0.015 & -0.130 & +56.0 \\

Linear decay
& -0.21 & 0.00 & -1.2 & -0.002 & +0.001 & +0.006 & -1.7 \\

No $S$
& -6.13 & +0.35 & +3.5 & -0.110 & -0.080 & -0.636 & +4.3 \\

No $1-\rho$
& -0.22 & -0.45 & +3.7 & +0.019 & +0.036 & +0.310 & +5.4 \\

No $1-\widetilde{\pi}$
& -1.25 & -0.80 & +3.2 & +0.026 & +0.050 & +0.609 & +5.1 \\

\bottomrule
\end{tabularx}
\vspace{-1mm}
\end{table}

\subsection{Robustness, Ablations, and Sensitivity}

Fig.~\ref{fig:runtime-robustness} compares the relative trade-offs among
D-GREEDY, SPFR, and GLOBAL within each runtime regime. Because the two
panels are normalized separately, their radial magnitudes should not be used
to compare absolute degradation across regimes; the quantitative comparisons
below use the unnormalized source metrics. Under nominal dynamics, SPFR sacrificed $1.37\%$ Utility relative to
D-GREEDY, but improved success by $1.20$ percentage points. It also reduced
hops, P95 delay, request messages, and communication cost by
$2.54$--$38.17\%$. Under high joint stress, SPFR improved Utility and
success by $3.21\%$ and $4.40$ percentage points, respectively, while
retaining network-cost reductions of $14.04$--$35.25\%$. Its relative advantage therefore became
stronger when service and network states were less stable. Compared with
GLOBAL, SPFR retained at least $97.92\%$ of its utility with at least
$73.46\times$ fewer request messages. No realized deadline or budget violation was observed under the evaluated
parameter ranges; we treat these outcomes as feasibility audits rather than
evidence of performance at the SLA boundary. Overall, SPFR did not dominate
every executor attribute; it achieved a more favorable balance between
executor quality, network expenditure, and robustness.

Fig.~\ref{fig:parameter-sensitivity} shows a clear diminishing return from
expanding the control horizon. Increasing $H_{\rm ctrl}$ from 1 to 2 raised
utility from $0.589$ to $0.722$ and success from $76.6\%$ to $91.2\%$.
Increasing it further to 3 yielded only a $1.64\%$ utility gain, while FIB
state and beacon traffic increased by $2.08\times$ and $2.60\times$,
respectively. Similarly, increasing $\omega_h$ from $0.04$ to $0.08$ reduced
hops and communication cost by $14.1\%$ and $16.0\%$, with only a $0.50\%$
utility loss. We therefore selected $H_{\rm ctrl}=2$ and $\omega_h=0.08$ as
empirical operating points near the observed utility--overhead elbow, rather
than claiming that they are universally optimal.

Table~\ref{tab:design-ablation} reports mean seed-wise paired changes relative
to full \SPFR{} on the independent UNINETT ablation workload. Removing
reselection yielded $1.10\%$ lower Utility, together with $7.1\%$ fewer hops
and $5.2\%$ fewer messages. Removing hop attenuation increased Utility by
$1.69\%$, but reduced Success by $1.45$ percentage points and increased hops
and messages by $56.0\%$ and $39.2\%$, respectively. The linear ablation
replaces $\exp(-\omega_h h)$ with $[1-\omega_h h]_+$, matching the
exponential kernel in value and first-order slope at zero distance. Linear
attenuation closely matched exponential attenuation, changing Utility, hops,
and messages by only $-0.21\%$, $-1.7\%$, and $-1.2\%$, respectively.
These mean changes are consistent with attenuation primarily limiting the
network cost of pursuing distant executors. Removing semantic similarity produced the largest observed Utility loss
($6.13\%$) and reduced selected similarity by $0.110$. Removing capacity
awareness changed Utility by only $-0.22\%$, but increased selected load by
$0.036$. Removing price awareness reduced Utility by $1.25\%$ and increased
price by $0.609$~c.u. These observed component-specific shifts are consistent
with distinct roles for the three executor-utility terms.

Figure~\ref{fig:weight-sensitivity} evaluates SPFR under five
semantic/capacity/price weight profiles. These profiles are default
(70/15/15), semantic-heavy (85/7.5/7.5), balanced (50/25/25),
capacity-heavy (60/30/10), and price-heavy (60/10/30). When all outcomes
were rescored using the default weights, utility remained within
$0.712$--$0.722$. The semantic-heavy profile increased similarity from
$0.814$ to $0.838$, but also selected more loaded and expensive executors.
The balanced profile reduced load and price to $0.274$ and $1.210$~c.u.,
respectively, at lower semantic similarity. The capacity- and price-heavy
profiles produced corresponding shifts toward lower load and price. The
stable default-rescored utility and consistent component shifts show that the
observed trade-off was not specific to one weight configuration.

As a separate static measured-topology audit, we evaluated five seeds at
$H_{\rm ctrl}=3$ on GEANT, UNINETT, Deltacom, and KDL, spanning 40--754
nodes. Mean FIB size remained within 12.44--21.55 entries per agent
(P95: 20.0--44.4), whereas the aggregate beacon rate increased by
$7.15\times$ from GEANT to KDL. Thus, across the measured topologies, bounded
dissemination controlled per-agent state, while network-wide beacon traffic
remained the principal scaling cost.

\subsection{Theory and Runtime-Safety Audit}
\label{sec:theory-safety-audit}

\begin{table}[t]
\centering
\caption{Frozen-theory, dynamic-runtime, and control-plane audits.
$n$ denotes tasks for the first two scopes and events for control.}
\label{tab:theory-safety-audit}
\scriptsize
\setlength{\tabcolsep}{1.5pt}
\renewcommand{\arraystretch}{1.02}
\begin{tabularx}{\columnwidth}{
@{}
>{\centering\arraybackslash}p{0.14\columnwidth}
>{\raggedright\arraybackslash}X
>{\centering\arraybackslash}p{0.19\columnwidth}
>{\centering\arraybackslash}p{0.13\columnwidth}
@{}}
\toprule
Scope
& \multicolumn{1}{c}{Check}
& Result
& $n$ \\
\midrule

\multirow{6}{*}{Frozen}
&Bounded reference-route $\mathrm{P2Ratio}$
& 0.9623
& $6{,}000$ \\

&Full-view reference-route $\mathrm{P2Ratio}$
& 1.0000
& $3{,}000$ \\

& Maximum $\mathrm{NormP2Gap}$
& 0.651
& $5{,}996$ \\

& P2-bound violations
& 0
& $5{,}996$ \\

& Ascent/inheritance/route loops
& $0/0/0$
& $6{,}000$ \\

& Zero-attractor tasks
& $4$ ($0.067\%$)
& $6{,}000$ \\

\midrule

\multirow{2}{*}{Dynamic}
& Source-admission \NoRoute
& $351$ ($5.85\%$)
& $6{,}000$ \\

& Loops/hop-budget expirations
& $0/0$
& $4{,}000$ \\

\midrule

\multirow{3}{*}{Control}
& Converged events
& 100\%
& 120 \\

& Withdrawal/down convergence
& $3.77B$
& 60 \\

& Join/up convergence
& $0.79B$
& 60 \\

\bottomrule
\end{tabularx}
\vspace{-1mm}
\end{table}

Under frozen snapshots, at $H_{\rm ctrl}=2$ and $\omega_h=0.08$, the bounded
reference route achieved a mean $\mathrm{P2Ratio}$ of $0.9623$, while its
full-view counterpart attained $\mathrm{P2Ratio}=1$ on all $3{,}000$ tasks.
Four frozen tasks lacked a source-visible positive attractor; they were
assigned zero in the all-task ratio and excluded from the normalized-gap
audit. Among the remaining $5{,}996$ theorem-covered tasks, the maximum
$\mathrm{NormP2Gap}$ was $0.651<1$, with no P2-bound violation. No
potential-ascent, candidate-inheritance, or frozen forwarding-loop violation
was observed.

The dynamic audits are separate from, and do not extend, the frozen-state
guarantees. In the main runs, $351$ of the $6{,}000$ \SPFR{} tasks returned
\NoRoute at the source because its positive-potential eligible set was empty.
No runtime loop or hop-budget expiration occurred across the $4{,}000$
stress-audit tasks. All $120$ injected control-plane events converged;
withdrawal/down and join/up updates completed in $3.77B$ and $0.79B$ on
average, respectively, where $B$ denotes the beacon interval.

\section{Conclusion}
In this paper, we study distributed semantic task routing in IoA networks,
where executor discovery and next-hop forwarding are jointly performed under
bounded local visibility. We formulate constrained executor and path selection
and develop SPFR, which updates both decisions through task-conditioned
semantic potentials. Theoretical analysis establishes its frozen-state routing
properties, while experiments show a favorable balance between executor utility
and network overhead under dynamic conditions. SPFR performs in-path executor
discovery and reselection after forwarding is initialized by a source-visible
positive-potential executor; otherwise, it returns NoRoute. Extending route initialization to zero-attractor sources and validating
SPFR in operational IoA deployments remains future work.

\bibliographystyle{IEEEtranBST2/IEEEtran}
\bibliography{IEEEtranBST2/ArticleReferences}

\end{document}